\providecommand{\U}[1]{\protect\rule{.1in}{.1in}}
\providecommand{\U}[1]{\protect\rule{.1in}{.1in}}
\newtheorem{theorem}{Theorem}
\theoremstyle{plain}
\newtheorem{lemma}{Lemma}
\numberwithin{equation}{section}
\begin{document}
\title[Dynamical Invariants]{Dynamical Invariants for Variable Quadratic
Hamiltonians}
\author{Sergei K. Suslov}
\address{School of Mathematical and Statistical Sciences \& Mathematical,
Computational and Modeling Sciences Center, Arizona State University, Tempe,
AZ 85287--1804, U.S.A.}
\email{sks@asu.edu}
\urladdr{http://hahn.la.asu.edu/\symbol{126}suslov/index.html}
\date{\today }
\subjclass{Primary 81Q05, 35C05. Secondary 42A38}
\keywords{The time-dependent Schr\"{o}dinger equation, Cauchy initial value
problem, Green function, propagator, dynamical symmetry, quantum integrals
of motion and dynamical invariants, Ermakov's systems}

\begin{abstract}
We consider linear and quadratic integrals of motion for general variable
quadratic Hamiltonians. Fundamental relations between the eigenvalue problem
for linear dynamical invariants and solutions of the corresponding Cauchy
initial value problem for the time-dependent Schr\"{o}dinger equation are
emphasized. An eigenfunction expansion of the solution of the initial value
problem is also found. A nonlinear superposition principle for generalized
Ermakov systems is established as a result of decomposition of the general
quadratic invariant in terms of the linear ones.
\end{abstract}

\maketitle

\section{An Introduction}

Quantum systems with variable quadratic Hamiltonians are called the
generalized harmonic oscillators (see, for example, \cite{Berry85}, \cite%
{Dod:Mal:Man75}, \cite{Dodonov:Man'koFIAN87}, \cite{Hannay85}, \cite{Leach90}%
, \cite{Malkin:Man'ko79}, \cite{Wolf81}, \cite{Yeon:Lee:Um:George:Pandey93}
and references therein). They have attracted substantial attention over the
years in view of their great importance in many advanced quantum problems.
Examples include coherent states and uncertainty relations \cite%
{Malkin:Man'ko79}, \cite{Malk:Man:Trif69}, \cite{Malk:Man:Trif70}, \cite%
{Mand:Karpov:Cerf}, \cite{Klauder:Sudarshan}, Berry's phase \cite{Berry85}, 
\cite{Berry:Hannay88}, \cite{Hannay85}, \cite{Leach90}, \cite{Morales88},
asymptotic and numerical methods \cite{Kruskal62}, \cite{Maslov:Fedoriuk}, 
\cite{Milne30}, \cite{Mun:Ru-Paz:Wolf09}, quantization of mechanical systems 
\cite{Faddeyev69}, \cite{FeynmanPhD}, \cite{Feynman}, \cite{Fey:Hib}, \cite%
{Kochan07}, \cite{Kochan10} and Hamiltonian cosmology \cite%
{Bertoni:Finelli:Venturi98}, \cite{Finelli:Gruppuso:Venturi99}, \cite%
{Finelli:Vacca:Venturi98}, \cite{Hawkins:Lidsey02}, charged particle traps 
\cite{Major:Gheorghe:Werth} and motion in uniform magnetic fields \cite%
{Cor-Sot:Lop:Sua:Sus}, \cite{Corant:Snyder58}, \cite{Dodonov:Man'koFIAN87}, 
\cite{La:Lif}, \cite{Lewis67}, \cite{Lewis68}, \cite{Lewis:Riesen69}, \cite%
{Malk:Man:Trif70}, molecular spectroscopy \cite{Dokt:Mal:Man77}, \cite%
{Malkin:Man'ko79} and polyatomic molecules in varying external fields,
crystals through which an electron is passing and exciting the oscillator
modes, and other interactions of the modes with external fields \cite%
{Fey:Hib}. Quadratic Hamiltonians have particular applications in quantum
electrodynamics because the electromagnetic field can be represented as a
set of forced harmonic oscillators \cite{Bo:Shi}, \cite{Dodonov:Man'koFIAN87}%
, \cite{Fey:Hib}, \cite{Merz} and \cite{Schiff}. Nonlinear oscillators play
a central role in the novel theory of Bose--Einstein condensation \cite%
{Dal:Giorg:Pitaevski:Str99}. From a general point of view, the dynamics of
gases of cooled atoms in a magnetic trap at very low temperatures can be
described by an effective equation for the condensate wave function known as
the Gross--Pitaevskii (or nonlinear Schr\"{o}dinger) equation \cite%
{Kagan:Surkov:Shlyap96}, \cite{Kagan:Surkov:Shlyap97}, \cite%
{Kivsh:Alex:Tur01} and \cite{Per-G:Tor:Mont}.\medskip\ 

In this Letter we consider the one-dimensional time-dependent Schr\"{o}%
dinger equation%
\begin{equation}
i\frac{\partial \psi }{\partial t}=H\psi  \label{Schroedinger}
\end{equation}%
with general variable quadratic Hamiltonians of the form%
\begin{equation}
H=a\left( t\right) p^{2}+b\left( t\right) x^{2}+c\left( t\right) px+d\left(
t\right) xp,\quad p=-i\frac{\partial }{\partial x},  \label{GenHam}
\end{equation}%
where $a\left( t\right) ,$ $b\left( t\right) ,$ $c\left( t\right) ,$ and $%
d\left( t\right) $ are real-valued functions of time, $t,$ only (see, for
example, \cite{Cor-Sot:Lop:Sua:Sus}, \cite{Cor-Sot:Sua:Sus}, \cite%
{Cor-Sot:Sua:SusInv}, \cite{Cor-Sot:Sus}, \cite{Dod:Mal:Man75}, \cite%
{FeynmanPhD}, \cite{Feynman}, \cite{Fey:Hib}, \cite{Lan:Sus}, \cite{Lop:Sus}%
, \cite{Me:Co:Su}, \cite{SuazoF}, \cite{Sua:Sus}, \cite{Suaz:Sus}, \cite%
{Sua:Sus:Vega}, \cite{Wolf81}, and \cite{Yeon:Lee:Um:George:Pandey93} for a
general approach and some elementary solutions). The corresponding Green
function, or Feynman's propagator, can be found as follows \cite%
{Cor-Sot:Lop:Sua:Sus}, \cite{Suaz:Sus}:%
\begin{equation}
\psi =G\left( x,y,t\right) =\frac{1}{\sqrt{2\pi i\mu _{0}\left( t\right) }}\
e^{i\left( \alpha _{0}\left( t\right) x^{2}+\beta _{0}\left( t\right)
xy+\gamma _{0}\left( t\right) y^{2}\right) },  \label{in2}
\end{equation}%
where%
\begin{eqnarray}
&&\alpha _{0}\left( t\right) =\frac{1}{4a\left( t\right) }\frac{\mu
_{0}^{\prime }\left( t\right) }{\mu _{0}\left( t\right) }-\frac{c\left(
t\right) }{2a\left( t\right) },  \label{in3} \\
&&\beta _{0}\left( t\right) =-\frac{\lambda \left( t\right) }{\mu _{0}\left(
t\right) },\qquad \lambda \left( t\right) =\exp \left( \int_{0}^{t}\left(
c\left( s\right) -d\left( s\right) \right) \ ds\right) ,  \label{in4} \\
&&\gamma _{0}\left( t\right) =\frac{a\left( t\right) \lambda ^{2}\left(
t\right) }{\mu _{0}\left( t\right) \mu _{0}^{\prime }\left( t\right) }+\frac{%
c\left( 0\right) }{2a\left( 0\right) }-4\int_{0}^{t}\frac{a\left( s\right)
\sigma \left( s\right) \lambda ^{2}\left( s\right) }{\left( \mu _{0}^{\prime
}\left( s\right) \right) ^{2}}\ ds,  \label{in5}
\end{eqnarray}%
and the function $\mu _{0}\left( t\right) $ satisfies the\ characteristic
equation%
\begin{equation}
\mu ^{\prime \prime }-\tau \left( t\right) \mu ^{\prime }+4\sigma \left(
t\right) \mu =0  \label{in6}
\end{equation}%
with%
\begin{equation}
\tau \left( t\right) =\frac{a^{\prime }}{a}+2c-2d,\qquad \sigma \left(
t\right) =ab-cd+\frac{c}{2}\left( \frac{a^{\prime }}{a}-\frac{c^{\prime }}{c}%
\right)  \label{in7}
\end{equation}%
subject to the initial data%
\begin{equation}
\mu _{0}\left( 0\right) =0,\qquad \mu _{0}^{\prime }\left( 0\right)
=2a\left( 0\right) \neq 0.  \label{in8}
\end{equation}%
(More details can be found in Refs.~\cite{Cor-Sot:Lop:Sua:Sus} and \cite%
{Suaz:Sus}.) Then by the superposition principle the solution of the Cauchy
initial value problem can be presented in an integral form%
\begin{equation}
\psi \left( x,t\right) =\int_{-\infty }^{\infty }G\left( x,y,t\right) \
\varphi \left( y\right) \ dy,\quad \lim_{t\rightarrow 0^{+}}\psi \left(
x,t\right) =\varphi \left( x\right)  \label{CauchyInVProb}
\end{equation}%
for a suitable initial function $\varphi $ on $%
%TCIMACRO{\U{211d} }%
%BeginExpansion
\mathbb{R}
%EndExpansion
$ (a rigorous proof is given in \cite{Suaz:Sus} and uniqueness is analyzed
in \cite{Cor-Sot:Sua:SusInv}; another forms of solution are provided by (\ref%
{KSolution}) and (\ref{QIExpSolution})).\medskip

A detailed review on dynamical symmetries and quantum integrals of motion
for the time-dependent Schr\"{o}dinger equation can be found in \cite%
{Dodonov:Man'koFIAN87} and \cite{Malkin:Man'ko79} (see also an extensive
list of references therein). In this Letter, which is a continuation of the
recent paper \cite{Cor-Sot:Sua:SusInv}, a natural connection between the
linear and quadratic integrals of motion for general variable quadratic
Hamiltonians is established. As a result a nonlinear superposition principle
for the corresponding Ermakov systems, known as Pinney's solution, is
obtained and generalized. We pay also special attention to fundamental
relations between the linear dynamical invariants of Dodonov, Malkin, Man'ko
and Trifonov and solutions of the Cauchy initial value problem (see original
works \cite{Dod:Mal:Man75}, \cite{Dodonov:Man'koFIAN87}, \cite%
{Malkin:Man'ko79}, \cite{Malk:Man:Trif73}). In addition this initial value
problem is explicitly solved in terms of the quadratic invariant
eigenfunction expansion, which seems to be missing in the available
literature in general.\medskip

The Letter is organized as follows. We start from the standard definitions
of the dynamical symmetry and quantum integrals of motion, introducing also
some elementary tools, in the next two sections. Then in Section~4 we
describe all linear dynamical invariants for variable quadratic Hamiltonians
and determine their actions on the solutions of the corresponding
time-dependent Schr\"{o}dinger equation. In Section~5 all quadratic quantum
integrals of motion are characterized in terms of solutions of the
generalized Ermakov equation and a detailed proof is given. Their
decomposition into products of the linear invariants is derived in Section~6
and explicit actions on the solutions are established in Section~7. The last
section deals with a related nonlinear superposition principle for Ermakov's
equations and some computational details are provided in Appendices~A and B.

\section{Dynamical Symmetry}

In this Letter we elaborate on the following property.

\begin{lemma}
If%
\begin{equation}
i\frac{\partial \psi }{\partial t}=H\psi ,\qquad i\frac{\partial O}{\partial
t}+OH-H^{\dagger }O=0,  \label{Invariant}
\end{equation}%
then the function $\psi _{1}=O\psi $ satisfies the time-dependent Schr\"{o}%
dinger equation%
\begin{equation}
i\frac{\partial \psi _{1}}{\partial t}=H^{\dagger }\psi _{1},
\label{AdjointSchroedinger}
\end{equation}%
where $H^{\dagger }$ is the adjoint of the Hamiltonian $H.$
\end{lemma}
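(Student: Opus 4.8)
The plan is to verify \eqref{AdjointSchroedinger} by a direct computation: differentiate the product $\psi_{1}=O\psi$ in time and then substitute the two relations collected in \eqref{Invariant}. Concretely, I would first apply the Leibniz rule to the operator-valued function $t\mapsto O(t)$ acting on the function $t\mapsto \psi(\cdot,t)$, namely
\begin{equation}
i\,\frac{\partial \psi _{1}}{\partial t}=\left( i\,\frac{\partial O}{\partial t}\right) \psi +O\left( i\,\frac{\partial \psi }{\partial t}\right) ,
\end{equation}
the point being that the time derivative acts termwise on $O$ and on $\psi$. The first factor is rewritten using the second equation in \eqref{Invariant} as $i\,\partial _{t}O=H^{\dagger }O-OH$, and the second factor using the Schr\"{o}dinger equation \eqref{Schroedinger} as $i\,\partial _{t}\psi =H\psi $.

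Substituting both yields $i\,\partial _{t}\psi _{1}=(H^{\dagger }O-OH)\psi +OH\psi $, and the two occurrences of $OH\psi $ cancel, leaving $i\,\partial _{t}\psi _{1}=H^{\dagger }O\psi =H^{\dagger }\psi _{1}$, which is exactly \eqref{AdjointSchroedinger}. At the formal level this is the entire argument: the cancellation is precisely what the intertwining condition on $O$ was designed to produce. It is worth recording the consistency check that when $H$ happens to be self-adjoint ($H^{\dagger }=H$, i.e.\ $c\equiv d$ in \eqref{GenHam}) the condition on $O$ reduces to the familiar Heisenberg-type relation $i\,\partial _{t}O+[O,H]=0$, and $\psi _{1}$ then solves the \emph{same} equation as $\psi $, as it should.

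The only place where genuine care is needed — and what I would regard as the main, indeed the sole, obstacle — is the rigorous justification of the Leibniz step: $O$ is in general an unbounded differential operator with time-dependent coefficients (for the quadratic Hamiltonians treated here it is a first- or second-order polynomial in $x$ and $p=-i\partial _{x}$), so one must know that $\psi (\cdot ,t)$ remains in a domain on which $O(t)$, $H(t)$, and $H^{\dagger }(t)$ all act and through which $\partial _{t}$ may be passed. For the class of problems at hand this is harmless: working on the Schwartz space $\mathcal{S}(\mathbb{R})$, or on the natural dense invariant domain preserved by the propagator \eqref{in2}, every manipulation above is literally valid. I would therefore state this domain hypothesis once at the outset and then carry out the three-line computation as indicated.
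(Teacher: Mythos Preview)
Your proposal is correct and matches the paper's own proof essentially line for line: apply the Leibniz rule to $i\,\partial_t(O\psi)$, substitute $i\,\partial_t O = H^\dagger O - OH$ and $i\,\partial_t\psi = H\psi$, and observe the cancellation of the $OH\psi$ terms. The paper presents this as a bare two-line computation without your additional remarks on domains or the self-adjoint special case, but the argument is the same.
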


When $H=H^{\dagger },$ this property is taken as a definition of the
dynamical symmetry of the time-dependent Schr\"{o}dinger equation (\ref%
{Schroedinger}) (see, for example, \cite{Dodonov:Man'koFIAN87}, \cite%
{Lewis:Riesen69}, \cite{Malkin:Man'ko79} and references therein). At the
same time one has to deal with non-self-adjoint Hamiltonians in the theory
of dissipative quantum systems (see, for example, \cite{Cor-Sot:Sua:Sus}, 
\cite{Dekker81}, \cite{Kochan10}, \cite{Tarasov01}, \cite{Um:Yeon:George}
and references therein) or when using separation of variables in an
accelerating frame of reference for a charged particle moving in an uniform
variable magnetic field \cite{Cor-Sot:Lop:Sua:Sus}.

\begin{proof}
Partial differentiation%
\begin{eqnarray}
&&i\frac{\partial \psi _{1}}{\partial t}=i\frac{\partial }{\partial t}\left(
O\psi \right) =i\frac{\partial O}{\partial t}\psi +iO\frac{\partial \psi }{%
\partial t} \\
&&\qquad \quad =\left( H^{\dagger }O-OH\right) \psi +OH\psi =H^{\dagger
}\psi _{1}  \notag
\end{eqnarray}%
provides a direct proof.
\end{proof}

Definition of the dynamical symmetry is usually given in terms of solutions
of the same equation. A simple modification helps with the non-self-adjoint
quadratic Hamiltonians.

\begin{lemma}
The wave functions $\psi $ and $\chi ,$ related by%
\begin{equation}
\psi =\left( e^{-\int_{0}^{t}\left( c-d\right) ds}\ O\right) \chi ,
\label{Conjugate}
\end{equation}%
are solutions of the same Schr\"{o}dinger equation (\ref{Schroedinger})--(%
\ref{GenHam}), if the operator $O$ satisfies hypothesis of Lemma~1.
\end{lemma}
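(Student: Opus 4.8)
The statement to prove is Lemma 2: if $O$ satisfies the hypothesis of Lemma 1 (i.e. $i\,\partial_t O + OH - H^\dagger O = 0$), then $\psi$ and $\chi$ related by $\psi = \left(e^{-\int_0^t(c-d)\,ds}\,O\right)\chi$ are both solutions of the \emph{same} Schr\"odinger equation $i\,\partial_t\psi = H\psi$. The natural approach is to combine Lemma 1 with an explicit description of how $H^\dagger$ differs from $H$ for the quadratic Hamiltonian \eqref{GenHam}, and absorb that difference into the scalar time-dependent factor $e^{-\int_0^t(c-d)\,ds}$.

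First I would compute $H^\dagger$. For $H = a(t)p^2 + b(t)x^2 + c(t)px + d(t)xp$ with real coefficients and $p = -i\partial_x$ formally self-adjoint, one has $(px)^\dagger = xp$ and $(xp)^\dagger = px$, so $H^\dagger = a\,p^2 + b\,x^2 + d\,px + c\,xp$; equivalently, using the commutator $[x,p] = i$, one gets $H^\dagger = H + (d-c)[x,p]/\,\cdot$ — more precisely $H^\dagger - H = (d-c)(px - xp) = (d-c)\,i\,[p,x]$. Wait: $px - xp = [p,x] = -i$, so $H^\dagger - H = (d-c)(px-xp) = -i(d-c) = i(c-d)$. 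Thus $H^\dagger = H + i\bigl(c(t)-d(t)\bigr)$, a scalar shift of the Hamiltonian by a purely time-dependent term.

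Next I would invoke Lemma 1: since $O$ satisfies \eqref{Invariant}, the function $\psi_1 = O\chi$ satisfies $i\,\partial_t\psi_1 = H^\dagger\psi_1 = H\psi_1 + i(c-d)\psi_1$ (here I rename the wave function in Lemma 1 to $\chi$, which is harmless since Lemma 1 holds for any solution of $i\,\partial_t\chi = H\chi$). Now set $\psi = f(t)\,\psi_1$ with $f(t) = e^{-\int_0^t(c-d)\,ds}$, so that $f'/f = -(c-d)$. Then $i\,\partial_t\psi = i f'\psi_1 + f\,i\,\partial_t\psi_1 = -i(c-d)f\psi_1 + f\bigl(H\psi_1 + i(c-d)\psi_1\bigr) = f\,H\psi_1 = H\psi$, using that $H$ acts only on the $x$-variable and commutes with the scalar $f(t)$. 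This is exactly $i\,\partial_t\psi = H\psi$, so $\psi$ solves the original Schr\"odinger equation \eqref{Schroedinger}–\eqref{GenHam}. Since $\psi = f(t)O\chi = \left(e^{-\int_0^t(c-d)\,ds}\,O\right)\chi$, the claim follows.

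The only subtle point — the "main obstacle," though it is modest — is getting the sign and the operator ordering right in $H^\dagger - H = i(c-d)$, which hinges on the conventions $p = -i\partial_x$, $[x,p] = i$, and the fact that the $c$- and $d$-terms are swapped under adjunction while the $a$- and $b$-terms are invariant; everything else is the one-line computation above. I would also remark that $f(0) = 1$, so at $t = 0$ the relation reduces to $\psi = O\chi$, consistent with Lemma 1, and that the construction is the precise device that lets one treat the non-self-adjoint situation: the operator $e^{-\int_0^t(c-d)\,ds}\,O$ plays, for the \emph{same} equation, the role that $O$ plays in passing from $H$ to $H^\dagger$.
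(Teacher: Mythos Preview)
Your proof is correct. The computation $H^\dagger - H = c(xp-px) + d(px-xp) = i(c-d)$ is right, and the subsequent verification that the scalar factor $f(t)=e^{-\int_0^t(c-d)\,ds}$ exactly cancels this shift is clean.

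The paper's own argument is morally the same but packaged more structurally: it observes that $O_0(c,d)=e^{\int_0^t(c-d)\,ds}\,I$ is itself a dynamical invariant in the sense of Lemma~1, and then applies Lemma~1 \emph{twice}---first $O$ carries solutions of the $H$-equation to solutions of the $H^\dagger$-equation, and then $O_0(d,c)=e^{-\int_0^t(c-d)\,ds}\,I$ (the invariant for the Hamiltonian with $c$ and $d$ swapped, i.e.\ for $H^\dagger$) carries those back to solutions of $(H^\dagger)^\dagger=H$. Your explicit calculation of $H^\dagger-H$ is precisely what one has to check to see that $O_0$ is an invariant, so the two arguments have identical content; yours is more hands-on, the paper's emphasizes that the exponential prefactor is itself an instance of the invariant machinery rather than an ad hoc fix.
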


\begin{proof}
The simplest dynamical invariant, or an operator with the property (\ref%
{Invariant}), is given by%
\begin{equation}
O_{0}=O_{0}\left( c,d\right) =e^{\int_{0}^{t}\left( c-d\right) ds}\ I,
\label{SimplestInvariant}
\end{equation}%
where $I=id$ is the identity operator. (More details are provided in
Section~4.) Apply Lemma~1 twice in the following order%
\begin{equation}
\psi =O_{0}\left( d,c\right) \left( O\chi \right)
\end{equation}%
and use $\left( H^{\dagger }\right) ^{\dagger }=H$ to complete the proof.
\end{proof}

Examples occur throughout the Letter.

\section{Differentiation of Operators and Dynamical Invariants}

Following Lemma~1 we define the time derivative of an operator $O$ as follows%
\begin{equation}
\frac{dO}{dt}=\frac{\partial O}{\partial t}+\frac{1}{i}\left( OH-H^{\dagger
}O\right) ,  \label{Diff}
\end{equation}%
where $H^{\dagger }$ is the adjoint of the Hamiltonian operator $H.$ (This
formula is a simple extension of the well-known expression \cite{Deb:Mikus}, 
\cite{La:Lif}, \cite{Merz}, \cite{Schiff} to the case of a non-self-adjoint
Hamiltonian \cite{Cor-Sot:Sua:Sus}.) By definition%
\begin{equation}
\frac{dO}{dt}=\frac{\partial O}{\partial t}+\frac{1}{i}\left( OH-H^{\dagger
}O\right) =0  \label{DynamicalInvariant}
\end{equation}%
for any dynamical invariant. (In what follows we assume that the invariant $%
O $ does not involve time differentiation.)\medskip

This derivative is a linear operator%
\begin{equation}
\frac{d}{dt}\left( c_{1}O_{1}+c_{2}O_{2}\right) =c_{1}\frac{dO_{1}}{dt}+c_{2}%
\frac{dO_{2}}{dt}  \label{DiffLinear}
\end{equation}%
and the product rule takes the form%
\begin{eqnarray}
\frac{d}{dt}\left( O_{1}O_{2}\right) &=&\frac{\partial \left(
O_{1}O_{2}\right) }{\partial t}+\frac{1}{i}\left( \left( O_{1}O_{2}\right)
H-H^{\dagger }\left( O_{1}O_{2}\right) \right)  \label{ProductRule} \\
&=&\frac{dO_{1}}{dt}O_{2}+O_{1}\frac{dO_{2}}{dt}+iO_{1}\left( H-H^{\dag
}\right) O_{2}.  \notag
\end{eqnarray}%
For the general quadratic Hamiltonian, (\ref{GenHam}), one gets%
\begin{equation}
\frac{d}{dt}\left( O_{1}O_{2}\right) =\frac{dO_{1}}{dt}O_{2}+O_{1}\frac{%
dO_{2}}{dt}+\left( c-d\right) O_{1}O_{2}  \label{ProductRuleA}
\end{equation}%
and by definition (\ref{Diff})%
\begin{eqnarray}
\frac{d}{dt}\left( e^{\alpha \int_{0}^{t}\left( c-d\right) ds}\
O_{1}O_{2}\right) &=&e^{\alpha \int_{0}^{t}\left( c-d\right) ds}\ \left( 
\frac{dO_{1}}{dt}O_{2}+O_{1}\frac{dO_{2}}{dt}\right) \\
&&+\left( \alpha +1\right) \left( c-d\right) e^{\alpha \int_{0}^{t}\left(
c-d\right) ds}\ O_{1}O_{2}.  \notag
\end{eqnarray}%
If $\alpha =-1,$ we finally obtain%
\begin{equation}
\frac{d}{dt}\left( e^{-\int_{0}^{t}\left( c-d\right) ds}\ O_{1}O_{2}\right)
=e^{-\int_{0}^{t}\left( c-d\right) ds}\ \left( \frac{dO_{1}}{dt}O_{2}+O_{1}%
\frac{dO_{2}}{dt}\right) .  \label{productrule}
\end{equation}%
This implies that, if the operators $O_{1}$ and $O_{2}$ are dynamical
invariants, namely,%
\begin{eqnarray}
&&\frac{dO_{1}}{dt}=\frac{\partial O_{1}}{\partial t}+\frac{1}{i}\left(
O_{1}H-H^{\dagger }O_{1}\right) =0, \\
&&\frac{dO_{2}}{dt}=\frac{\partial O_{2}}{\partial t}+\frac{1}{i}\left(
O_{2}H-H^{\dagger }O_{2}\right) =0,
\end{eqnarray}%
then their modified product,%
\begin{equation}
E=e^{-\int_{0}^{t}\left( c-d\right) ds}\ O_{1}O_{2},
\label{dynamicinvariant}
\end{equation}%
is also a dynamical invariant:%
\begin{equation}
\frac{dE}{dt}=\frac{\partial E}{\partial t}+\frac{1}{i}\left( EH-H^{\dagger
}E\right) =0.
\end{equation}%
In Section~6 this property allows us to describe connections between linear
and quadratic dynamical invariants of the time-dependent Hamiltonian (\ref%
{GenHam}).

\section{Linear Integrals of Motion}

All invariants of the form%
\begin{equation}
P=A\left( t\right) p+B\left( t\right) x+C\left( t\right) ,\quad \frac{dP}{dt}%
=0,  \label{LinInts}
\end{equation}%
(we call them the Dodonov--Malkin--Man'ko--Trifonov invariants; see, for
example, \cite{Dod:Mal:Man75}, \cite{Dodonov:Man'koFIAN87}, \cite%
{Malkin:Man'ko79}, and \cite{Malk:Man:Trif73} and references therein) for
the general variable quadratic Hamiltonian (\ref{GenHam}) can be found in
the following fashion. Use of the differentiation formula (\ref{Diff})
results in the system \cite{Cor-Sot:Sua:SusInv}:%
\begin{eqnarray}
A^{\prime } &=&2c\left( t\right) A-2a\left( t\right) B,  \label{LinCA} \\
B^{\prime } &=&2b\left( t\right) A-2d\left( t\right) B,  \label{LinCB} \\
C^{\prime } &=&\left( c\left( t\right) -d\left( t\right) \right) C.
\label{LinCC}
\end{eqnarray}%
The last equation is explicitly integrated and elimination of $B$ and $A$
from (\ref{LinCA}) and (\ref{LinCB}), respectively, gives the second-order
equations:%
\begin{eqnarray}
A^{\prime \prime }-\left( \frac{a^{\prime }}{a}+2c-2d\right) A^{\prime
}+4\left( ab-cd+\frac{c}{2}\left( \frac{a^{\prime }}{a}-\frac{c^{\prime }}{c}%
\right) \right) A &=&0,  \label{EqC} \\
B^{\prime \prime }-\left( \frac{b^{\prime }}{b}+2c-2d\right) B^{\prime
}+4\left( ab-cd-\frac{d}{2}\left( \frac{b^{\prime }}{b}-\frac{d^{\prime }}{d}%
\right) \right) B &=&0.  \label{EqBC}
\end{eqnarray}%
The first is simply the characteristic equation (\ref{in6})--(\ref{in7}).
(It also coincides with the Ehrenfest theorem (\cite{Cor-Sot:Sua:SusInv}, 
\cite{Ehrenfest}) when $c\leftrightarrow d.)$\medskip

Thus all linear quantum invariants are given by%
\begin{equation}
P=A\left( t\right) p+\frac{2c\left( t\right) A\left( t\right) -A^{\prime
}\left( t\right) }{2a\left( t\right) }x+C_{0}\exp \left( \int_{0}^{t}\left(
c\left( s\right) -d\left( s\right) \right) \ ds\right) ,  \label{GenLinInv}
\end{equation}%
where $A\left( t\right) $ is a general solution of equation (\ref{EqC})
depending upon two parameters and $C_{0}$ is the third constant. Study of
spectra of the linear dynamical invariants allows one to solve the Cauchy
initial value problem (\cite{Dod:Mal:Man75}, \cite{Dodonov:Man'koFIAN87}, 
\cite{Malkin:Man'ko79} and \cite{Malk:Man:Trif73}).

\begin{theorem}
\textbf{(Eigenvalue Problem for the Linear Invariants)} If%
\begin{equation}
P\left( t\right) =\mu \left( t\right) p+\frac{2c\left( t\right) \mu \left(
t\right) -\mu ^{\prime }\left( t\right) }{2a\left( t\right) }x,
\label{LinMomentum}
\end{equation}%
then for any solution $A=\mu \left( t\right) $ of the characteristic
equation (\ref{in6})--(\ref{in7}) we have%
\begin{equation}
P\left( t\right) K\left( x,y,t\right) =\beta \left( 0\right) \mu \left(
0\right) \lambda \left( t\right) y\ K\left( x,y,t\right) .
\label{EigenValueProblem}
\end{equation}%
The eigenfunctions are bounded solutions of the time-dependent Schr\"{o}%
dinger equation (\ref{Schroedinger}) given by%
\begin{equation}
K\left( x,y,t\right) =\frac{1}{\sqrt{2\pi \mu \left( t\right) }}\ e^{i\left(
\alpha \left( t\right) x^{2}+\beta \left( t\right) xy+\gamma \left( t\right)
y^{2}\right) },  \label{KKernel}
\end{equation}%
where $\lambda \left( t\right) =\exp \left( \int_{0}^{t}\left( c\left(
s\right) -d\left( s\right) \right) \ ds\right) $ and%
\begin{eqnarray}
\mu \left( t\right) &=&2\mu \left( 0\right) \mu _{0}\left( t\right) \left(
\alpha \left( 0\right) +\gamma _{0}\left( t\right) \right) ,  \label{MKernel}
\\
\alpha \left( t\right) &=&\frac{1}{4a\left( t\right) }\frac{\mu ^{\prime
}\left( t\right) }{\mu \left( t\right) }-\frac{c\left( t\right) }{2a\left(
t\right) }  \label{AKernel} \\
&=&\alpha _{0}\left( t\right) -\frac{\beta _{0}^{2}\left( t\right) }{4\left(
\alpha \left( 0\right) +\gamma _{0}\left( t\right) \right) },  \notag \\
\beta \left( t\right) &=&-\frac{\beta \left( 0\right) \beta _{0}\left(
t\right) }{2\left( \alpha \left( 0\right) +\gamma _{0}\left( t\right)
\right) }  \label{BKernel} \\
&=&\frac{\beta \left( 0\right) \mu \left( 0\right) }{\mu \left( t\right) }%
\lambda \left( t\right) ,  \notag \\
\gamma \left( t\right) &=&\gamma \left( 0\right) -\frac{\beta ^{2}\left(
0\right) }{4\left( \alpha \left( 0\right) +\gamma _{0}\left( t\right)
\right) }.  \label{CKernel}
\end{eqnarray}%
(When $\mu =\mu _{0}$ with $\mu _{0}\left( 0\right) =0$ and $\mu
_{0}^{\prime }\left( 0\right) =2a\left( 0\right) \neq 0,$ we obtain the
Green function (\ref{in2})--(\ref{in5}) up to a simple factor.)
\end{theorem}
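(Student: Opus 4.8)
The plan is to prove the statement in three parts: (1) the kernel $K(x,y,t)$ of (\ref{KKernel}) is a bounded solution of the Schr\"odinger equation (\ref{Schroedinger})--(\ref{GenHam}); (2) the two expressions given for each of $\mu ,\alpha ,\beta ,\gamma $ in (\ref{MKernel})--(\ref{CKernel}) agree; and (3) $P(t)K=\beta (0)\mu (0)\lambda (t)\,y\,K$ with $P(t)$ a genuine linear dynamical invariant.

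For part (1), together with the ``algebraic'' form of each coefficient in part (2), I would construct $K$ directly from the propagator. Fix $y\in \mathbb{R}$ and take the Gaussian initial function $\varphi _{y}(x)=K(x,y,0)=\left( 2\pi \mu (0)\right) ^{-1/2}e^{i(\alpha (0)x^{2}+\beta (0)xy+\gamma (0)y^{2})}$, which is an admissible initial datum in the sense of \cite{Suaz:Sus} provided $\operatorname{Im}\alpha (0)>0$. Set $\psi (x,t)=\int_{-\infty }^{\infty }G(x,z,t)\,\varphi _{y}(z)\,dz$ with $G$ given by (\ref{in2})--(\ref{in5}); by the superposition principle (\ref{CauchyInVProb}) this $\psi $ solves (\ref{Schroedinger})--(\ref{GenHam}) and $\lim_{t\rightarrow 0^{+}}\psi =\varphi _{y}$. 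The integrand is Gaussian in $z$, with quadratic coefficient $i\left( \alpha (0)+\gamma _{0}(t)\right) $ and linear coefficient $i\left( \beta _{0}(t)x+\beta (0)y\right) $; completing the square and using $\int_{\mathbb{R}}e^{i(Az^{2}+Bz)}\,dz=\sqrt{\pi i/A}\,e^{-iB^{2}/(4A)}$ with the branch determined by $\operatorname{Im}A>0$ produces a Gaussian in $x$ whose exponent coefficients are exactly the algebraic forms in (\ref{AKernel})--(\ref{CKernel}), and whose overall constant, the product of the three square-root factors, collapses to $\left( 2\pi \mu (t)\right) ^{-1/2}$ with $\mu (t)=2\mu (0)\mu _{0}(t)\left( \alpha (0)+\gamma _{0}(t)\right) $ as in (\ref{MKernel}). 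Hence $\psi =K$; boundedness of $K$ in $x$ then follows from $\operatorname{Im}\alpha (t)=\beta _{0}^{2}(t)\operatorname{Im}\alpha (0)/\left( 4|\alpha (0)+\gamma _{0}(t)|^{2}\right) >0$, which holds because $a,b,c,d$, and therefore $\mu _{0},\lambda ,\alpha _{0},\beta _{0},\gamma _{0}$, are real.

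For the ``differential'' forms in part (2): since $K$ is a Gaussian $\left( 2\pi \mu \right) ^{-1/2}e^{i(\alpha x^{2}+\beta xy+\gamma y^{2})}$ that solves $i\psi _{t}=H\psi $, substituting it into the equation forces its coefficients to satisfy the Riccati system $\alpha ^{\prime }+b+4a\alpha ^{2}+2(c+d)\alpha =0$, $\beta ^{\prime }+(4a\alpha +c+d)\beta =0$, $\gamma ^{\prime }+a\beta ^{2}=0$, together with the normalization relation $\mu ^{\prime }/\mu =4a\alpha +2c$. The last relation is precisely $\alpha =\frac{1}{4a}\frac{\mu ^{\prime }}{\mu }-\frac{c}{2a}$, the first line of (\ref{AKernel}); it linearizes the $\alpha $-equation into the characteristic equation (\ref{in6})--(\ref{in7}), and since $4a\alpha +c+d=\mu ^{\prime }/\mu -c+d$ while $\lambda ^{\prime }/\lambda =c-d$, the $\beta $-equation integrates to $\beta =\mathrm{const}\cdot \lambda /\mu $, which on matching at $t=0$ becomes the second line of (\ref{BKernel}). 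As a by-product, comparing this with the algebraic form and using the Riccati relation $\gamma _{0}^{\prime }=-a\beta _{0}^{2}$ that (\ref{in5}) integrates shows that $\mu =2\mu (0)\mu _{0}\left( \alpha (0)+\gamma _{0}\right) $ does solve (\ref{in6}) and has $\mu (0^{+})=\mu (0)$, since $\mu _{0}\gamma _{0}\rightarrow \tfrac{1}{2}$ as $t\rightarrow 0^{+}$.

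For part (3): because $\mu ^{\prime }/\mu =4a\alpha +2c$, one has $\frac{2c\mu -\mu ^{\prime }}{2a}=-2\alpha \mu $, so $P(t)=\mu (t)\left( p-2\alpha (t)x\right) $. A one-line computation gives $pK=-i\partial _{x}K=(2\alpha x+\beta y)K$, hence $(p-2\alpha x)K=\beta y\,K$ and $P(t)K=\mu (t)\beta (t)\,y\,K=\beta (0)\mu (0)\lambda (t)\,y\,K$ by the second line of (\ref{BKernel}), which is (\ref{EigenValueProblem}). That $P(t)$ is a dynamical invariant, $dP/dt=0$, requires nothing new: with $A=\mu $ a solution of (\ref{EqC})$=$(\ref{in6}) and $C_{0}=0$, $P(t)$ is exactly the linear invariant (\ref{GenLinInv}) of Section~4. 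I expect the only real obstacle to be the bookkeeping in part (2) that reconciles the ``differential'' description of $(\mu ,\alpha ,\beta ,\gamma )$ through the second-order characteristic equation with the ``algebraic'' one through $(\mu _{0},\alpha _{0},\beta _{0},\gamma _{0})$ --- in essence recognizing $\mu _{0}\gamma _{0}$ as a second solution of (\ref{in6}) independent of $\mu _{0}$, i.e.\ the identity $\gamma _{0}^{\prime }=-a\beta _{0}^{2}$ --- together with keeping track of the branch of the square root in the Gaussian integral so that the $i$'s cancel and the normalization emerges as $\left( 2\pi \mu (t)\right) ^{-1/2}$ with the correct phase. Everything else is a short direct calculation.
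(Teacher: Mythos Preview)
Your proposal is correct, and for the eigenvalue equation (\ref{EigenValueProblem}) --- which is the only thing the paper actually proves --- your computation in part~(3) is identical to the paper's: both reduce to $pK=(2\alpha x+\beta y)K$, observe that the $x$-coefficient $2\alpha\mu+\frac{2c\mu-\mu'}{2a}$ vanishes by the first line of (\ref{AKernel}), and finish with $\mu\beta=\beta(0)\mu(0)\lambda$ from the second line of (\ref{BKernel}). The paper simply cites \cite{Suaz:Sus} for the construction of $K$ and the coefficient formulas (\ref{MKernel})--(\ref{CKernel}), whereas you reconstruct them by convolving the Gaussian datum $\varphi_y=K(\,\cdot\,,y,0)$ against the propagator $G$; this is exactly the construction carried out in that reference, so you are not taking a different route but rather unpacking the citation.
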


\begin{proof}
The required solution (\ref{KKernel})--(\ref{CKernel}) has been already
constructed in \cite{Suaz:Sus} and one has to verify formula (\ref%
{EigenValueProblem}) only. We have%
\begin{eqnarray}
PK\left( x,y,t\right) &=&\left( Ap+Bx\right) K\left( x,y,t\right) \\
&=&\left( \left( 2\alpha A+B\right) x+\beta Ay\right) K\left( x,y,t\right) ,
\notag
\end{eqnarray}%
where%
\begin{eqnarray}
2\alpha A+B &=&\left( \frac{1}{2a}\frac{\mu ^{\prime }}{\mu }-\frac{c}{a}%
\right) A+\frac{2cA-A^{\prime }}{2a} \\
&=&\frac{A}{2a}\left( \frac{\mu ^{\prime }}{\mu }-\frac{A^{\prime }}{A}%
\right) =0  \notag
\end{eqnarray}%
provided $\mu =A.$ Then $\beta A=\beta \left( 0\right) \mu \left( 0\right)
\lambda $ and the proof is complete.
\end{proof}

Our theorem can be thought as a natural extension to the case of
non-self-adjoint variable quadratic Hamiltonians (\ref{GenHam}) of a
familiar relation between the Green function and linear dynamical invariants
established in \cite{Dod:Mal:Man75}, \cite{Dodonov:Man'koFIAN87}, \cite%
{Malkin:Man'ko79} and \cite{Malk:Man:Trif73}. The time-dependent factor in
the eigenvalue (\ref{EigenValueProblem}) corresponds to the statement of
Lemma~2.\medskip\ 

In this Letter we are interested, among other things, in a direct
verification of Lemma~1 for linear and quadratic dynamical invariants. For
the general Dodonov--Malkin--Man'ko--Trifonov invariant (\ref{GenLinInv}),
without any loss of generality, one can separately consider two cases, say,
when $A\left( t\right) \equiv 0$ with $C_{0}=1$ and when $C_{0}=0.$ If%
\begin{equation}
\psi _{1}=\psi \ e^{\int_{0}^{t}\left( c-d\right) ds},
\end{equation}%
then%
\begin{eqnarray}
&&i\frac{\partial \psi _{1}}{\partial t}=i\frac{\partial \psi }{\partial t}\
e^{\int_{0}^{t}\left( c-d\right) ds}+i\left( c-d\right) \psi _{1} \\
&&\qquad \quad =\left( ap^{2}+bx^{2}+cpx+dxp\right) \psi _{1}  \notag \\
&&\qquad \qquad +\left( c-d\right) \left( xp-px\right) \psi _{1}=H^{\dagger
}\psi _{1},
\end{eqnarray}%
which takes care of the first case (we have verified once again the
statement of Lemma~1 for the simplest invariant (\ref{SimplestInvariant}%
)).\medskip

In the second case, $C_{0}=0,$ we follow Theorem~1 and take a solution $%
A=\mu \left( t\right) $ of (\ref{EqC}) and (\ref{in6}), which does not have
to satisfy the initial conditions of the Green function (\ref{in8}). Then by
the superposition principle solution of the corresponding Cauchy initial
values problem is given by the integral operator \cite{Suaz:Sus}%
\begin{equation}
\psi \left( x,t\right) =\int_{-\infty }^{\infty }K\left( x,y,t\right) \ \chi
\left( y\right) \ dy,\qquad \psi \left( x,0\right) =\int_{-\infty }^{\infty
}K\left( x,y,0\right) \ \chi \left( y\right) \ dy  \label{KSolution}
\end{equation}%
with the kernel (\ref{KKernel})--(\ref{CKernel}). Thus%
\begin{equation}
\psi _{1}\left( x,t\right) =P\psi \left( x,t\right) =\int_{-\infty }^{\infty
}\left( PK\left( x,y,t\right) \right) \ \chi \left( y\right) \ dy
\label{PKSolution}
\end{equation}%
(we freely interchange differentiation and integration in this Letter; it
can be justified for certain classes of solutions (\cite{Lieb:Loss}, \cite%
{Oh89}, \cite{Per-G:Tor:Mont}, \cite{Velo96})). By choosing $\beta \left(
0\right) \mu \left( 0\right) =1$ in (\ref{EigenValueProblem}) we obtain

\begin{equation}
\psi _{1}\left( x,t\right) =P\psi \left( x,t\right) =e^{\int_{0}^{t}\left(
c-d\right) ds}\ \int_{-\infty }^{\infty }K\left( x,y,t\right) \ \left( y\chi
\left( y\right) \right) \ dy,  \label{SecondKSolution}
\end{equation}%
where the second factor obviously satisfies the Schr\"{o}dinger equation (%
\ref{Schroedinger}) (see \cite{Suaz:Sus} for details). Repeating the first
step one completes the proof. Eq.~(\ref{SecondKSolution}) provides a
spectral decomposion (\cite{Reed:SimonI}, \cite{Reed:SimonII}) of the linear
invariant $P,$ which has a continuous spectrum.\medskip

Our consideration shows how Lemma~1 works for the general
Dodonov--Malkin--Man'ko--Trifonov invariant --- application of this
invariant to a given solution of the corresponding Cauchy initial value
problem simply produces another solution with the following initial data:%
\begin{eqnarray}
\psi _{1}\left( x,0\right) &=&P\left( 0\right) \psi \left( x,0\right)
\label{InitialData} \\
&=&\int_{-\infty }^{\infty }K\left( x,y,0\right) \ \left( y\chi \left(
y\right) \right) \ dy.  \notag
\end{eqnarray}%
The reader can easily connect initial conditions of two solutions (\ref%
{KSolution}) and (\ref{SecondKSolution}) with the help of an analog of the
Fourier transform (see also Ref.~\cite{Sua:Sus} for a formal inversion
operator in general).

\section{Quadratic Dynamical Invariants}

All quantum quadratic integrals of motion are given by%
\begin{equation}
E=A\left( t\right) p^{2}+B\left( t\right) x^{2}+C\left( t\right) \left(
px+xp\right) ,\quad \frac{dE}{dt}=0,  \label{QuadraticInvariant}
\end{equation}%
and can be found as follows \cite{Cor-Sot:Sua:SusInv}.

\begin{lemma}
The quadratic dynamical invariants of the Hamiltonian (\ref{GenHam}) have
the form%
\begin{eqnarray}
E\left( t\right) &=&\left[ \left( \kappa \ p+\frac{\left( c+d\right) \kappa
-\kappa ^{\prime }}{2a}\ x\right) ^{2}+\frac{C_{0}}{\kappa ^{2}}\ x^{2}%
\right]  \label{InvSymmForm} \\
&&\times \exp \left( \int_{0}^{t}\left( c-d\right) \ ds\right) ,  \notag
\end{eqnarray}%
where $C_{0}$ is a constant and function $\kappa \left( t\right) $ is a
solution of the following nonlinear auxiliary equation:%
\begin{equation}
\kappa ^{\prime \prime }-\frac{a^{\prime }}{a}\kappa ^{\prime }+\left[
4ab+\left( \frac{a^{\prime }}{a}-c-d\right) \left( c+d\right) -c^{\prime
}-d^{\prime }\right] \kappa =C_{0}\frac{\left( 2a\right) ^{2}}{\kappa ^{3}}.
\label{AuxEquation}
\end{equation}
\end{lemma}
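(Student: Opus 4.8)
The plan is to substitute the general quadratic ansatz $E=A\,p^{2}+B\,x^{2}+C\,(px+xp)$ into the invariance condition $dE/dt=0$ from \eqref{DynamicalInvariant} and read off the resulting first-order system for the coefficient functions $A(t),B(t),C(t)$. Using the differentiation rule \eqref{Diff} together with the canonical commutator $[x,p]=i$, one computes $\frac{\partial E}{\partial t}+\frac{1}{i}(EH-H^{\dagger}E)$ for the Hamiltonian \eqref{GenHam}. Because $H-H^{\dagger}=(d-c)(xp-px)=-i(c-d)\,I$ up to the constant, and the quadratic combinations $p^{2},x^{2},px+xp$ form a closed three-dimensional Lie algebra under the relevant bracket, the operator $dE/dt$ is again a linear combination of $p^{2},x^{2},px+xp$ (plus a scalar that must vanish separately). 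Equating each coefficient to zero yields a linear $3\times3$ system of ODEs for $A,B,C$, analogous to \eqref{LinCA}--\eqref{LinCC} but quadratic in structure; this step I expect to be routine bookkeeping with the commutation relations.

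The second step is to solve that system. First I would isolate the combination whose evolution decouples: as in the linear case, the piece $e^{-\int_{0}^{t}(c-d)\,ds}$ should factor out naturally (this is exactly what \eqref{productrule}--\eqref{dynamicinvariant} predicts, since a quadratic invariant is a modified product of two linear ones), reducing the problem to a homogeneous autonomous-in-form system. Then I would introduce $\kappa(t)$ as the natural "amplitude" variable — concretely, set $A(t)=\kappa^{2}(t)\,e^{\int_{0}^{t}(c-d)\,ds}$, motivated by the symmetric form \eqref{InvSymmForm} in which $A$ is the square of the coefficient of $p$. Substituting $A=\kappa^{2}e^{\int(c-d)}$ into the ODE for $A$ (which will be a linear third-order or a nonlinear second-order relation once $B,C$ are expressed via $A$ and its derivatives), and eliminating $B$ and $C$ using the algebraic consequences of the coefficient equations, should produce precisely the Ermakov–Pinney-type equation \eqref{AuxEquation} with the inhomogeneous term $C_{0}(2a)^{2}/\kappa^{3}$, where $C_{0}$ enters as the constant of integration measuring the "Wronskian-like" invariant $B A - C^{2}$ (up to the exponential factor). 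Finally I would back-substitute to recover $B$ and $C$ in terms of $\kappa$, $\kappa'$, $a$, $c$, $d$, arriving at the stated symmetric form \eqref{InvSymmForm}.

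The main obstacle is the elimination step: expressing $B$ and $C$ cleanly in terms of $A$ (equivalently $\kappa$) and its derivatives, and verifying that the compatibility of the three coefficient ODEs collapses to the single scalar equation \eqref{AuxEquation} without leaving residual constraints. Concretely, from the coefficient system one gets $C$ proportional to $A'$ (shifted by $(c+d)A$ and divided by $2a$, matching the $x$-coefficient inside the square in \eqref{InvSymmForm}), and then the equation for $C'$ or $B'$ must be shown to be equivalent to \eqref{AuxEquation} after the substitution $A=\kappa^{2}e^{\int(c-d)}$; tracking the logarithmic-derivative terms $a'/a$, $c'$, $d'$ correctly through this reduction is where sign and factor errors are most likely. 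A useful cross-check is the self-adjoint reduction $c=d=0$: there \eqref{AuxEquation} must reduce to the classical Ermakov equation $\kappa''+4ab\,\kappa=C_{0}(2a)^{2}\kappa^{-3}$, and \eqref{InvSymmForm} to the Lewis–Riesenfeld invariant, which fixes all constants. I would also verify directly, using Lemma~1 and the product rule \eqref{productrule}, that the claimed $E$ in \eqref{InvSymmForm} genuinely satisfies $dE/dt=0$, which provides an independent confirmation that no invariant has been lost and that the parametrization by $\kappa$ and $C_{0}$ is exhaustive.
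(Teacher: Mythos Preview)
Your plan is correct and follows essentially the same route as the paper: derive the first-order system \eqref{EqA}--\eqref{EqCOne} for $A,B,C$, strip off the common factor $\lambda=e^{\int(c-d)\,ds}$ to obtain the symmetric system \eqref{EqAT}--\eqref{EqCT}, set $\widetilde{A}=\kappa^{2}$, solve for $\widetilde{C}$ and $\widetilde{B}$ from two of the equations, and then show that the remaining equation collapses to the Ermakov-type relation \eqref{AuxEquation} with $C_{0}$ as the constant of integration. One small difference worth noting: your identification of $C_{0}$ as the conserved quantity $\widetilde{A}\widetilde{B}-\widetilde{C}^{2}$ is in fact a cleaner shortcut than the paper's derivation, which instead performs a further change of variables $\mu=\kappa e^{-\int(c+d)},\ d\tau=2a\,e^{-2\int(c+d)}dt$ to reach the form $y(y''+qy)'+3y'(y''+qy)=0$ and then spots the integrating factor $y^{2}$; directly checking from \eqref{EqAT}--\eqref{EqCT} that $(\widetilde{A}\widetilde{B}-\widetilde{C}^{2})'=0$ bypasses that machinery and immediately gives both the Ermakov equation and the simplified expression \eqref{BBTildeKappa} for $\widetilde{B}$.
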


The structure of quadratic invariants is once again in perfect agreement
with Lemma~2.

\begin{proof}
In a general form this result has been established in \cite%
{Cor-Sot:Sua:SusInv} (see also Refs.~\cite{Lewis:Riesen69}, \cite{Symon70}
and \cite{Yeon:Lee:Um:George:Pandey93} for important earlier articles). A
somewhat different and more direct proof is presented here. It is sufficient
to show that the corresponding linear system%
\begin{eqnarray}
A^{\prime }+4aC-\left( 3c+d\right) A &=&0,  \label{EqA} \\
B^{\prime }-4bC+\left( c+3d\right) B &=&0,  \label{EqB} \\
C^{\prime }+2\left( aB-bA\right) -\left( c-d\right) C &=&0  \label{EqCOne}
\end{eqnarray}%
has the following general solution%
\begin{eqnarray}
A\left( t\right) &=&\kappa ^{2}\exp \left( \int_{0}^{t}\left( c-d\right) \
ds\right) ,  \label{SolA} \\
B\left( t\right) &=&\left[ \left( \frac{\kappa ^{\prime }-\left( c+d\right)
\kappa }{2a}\right) ^{2}+\frac{C_{0}}{\kappa ^{2}}\right] \exp \left(
\int_{0}^{t}\left( c-d\right) \ ds\right) ,  \label{SolB} \\
C\left( t\right) &=&\kappa \frac{\left( c+d\right) \kappa -\kappa ^{\prime }%
}{2a}\exp \left( \int_{0}^{t}\left( c-d\right) \ ds\right) ,  \label{SolC}
\end{eqnarray}%
where $C_{0}$ is a constant and the function $\kappa \left( t\right) $
satisfies the nonlinear auxiliary equation (\ref{AuxEquation}).\medskip

Indeed the substitution%
\begin{equation}
A\left( t\right) =\widetilde{A}\left( t\right) \lambda \left( t\right)
,\quad B\left( t\right) =\widetilde{B}\left( t\right) \lambda \left(
t\right) ,\quad \quad C\left( t\right) =\widetilde{C}\left( t\right) \lambda
\left( t\right) ,
\end{equation}%
where $\lambda \left( t\right) =\exp \left( \int_{0}^{t}\left( c-d\right) \
ds\right) ,$ transforms the original system into a more convenient form%
\begin{eqnarray}
\widetilde{A}^{\prime }+4a\widetilde{C}-2\left( c+d\right) \widetilde{A}
&=&0,  \label{EqAT} \\
\widetilde{B}^{\prime }-4b\widetilde{C}+2\left( c+d\right) \widetilde{B}
&=&0,  \label{EqBT} \\
\widetilde{C}^{\prime }+2\left( a\widetilde{B}-b\widetilde{A}\right) &=&0.
\label{EqCT}
\end{eqnarray}%
Letting%
\begin{equation}
\widetilde{A}=\kappa ^{2}\quad \text{and\quad }\widetilde{A}^{\prime
}=2\kappa \kappa ^{\prime }  \label{ATildeKappa}
\end{equation}%
in the first equation (\ref{EqAT}), we obtain%
\begin{eqnarray}
\widetilde{C} &=&\frac{\kappa }{2a}\left( \left( c+d\right) \kappa -\kappa
^{\prime }\right)  \label{CTildeKappa} \\
&=&-e^{\int \left( c+d\right) dt}\ \frac{\kappa }{2a}\frac{d}{dt}\left(
\kappa e^{-\int \left( c+d\right) dt}\right) .  \notag
\end{eqnarray}%
Then from the third equation (\ref{EqCT}):%
\begin{eqnarray}
\widetilde{B} &=&\frac{b}{a}\kappa ^{2}+\frac{1}{2a}\left[ \kappa \frac{%
\kappa ^{\prime }-\left( c+d\right) \kappa }{2a}\right] ^{\prime }
\label{BTildeKappa} \\
&=&\frac{b}{a}\kappa ^{2}+\frac{1}{2a}\frac{d}{dt}\left[ e^{\int \left(
c+d\right) dt}\ \frac{\kappa }{2a}\frac{d}{dt}\left( \kappa e^{-\int \left(
c+d\right) dt}\right) \right]  \notag
\end{eqnarray}%
and substitution of (\ref{CTildeKappa})--(\ref{BTildeKappa}) into (\ref{EqBT}%
) gives%
\begin{equation}
k\frac{d}{dt}\left[ 4abk^{2}\mu ^{2}+k\frac{d}{dt}\left( \mu \left( k\frac{%
d\mu }{dt}\right) \right) \right] +8abk^{2}\mu \left( k\frac{d\mu }{dt}%
\right) =0  \label{EqKappa}
\end{equation}%
in the following temporary notations%
\begin{equation}
\mu =\kappa e^{-\int \left( c+d\right) dt},\qquad k=\frac{1}{2a}e^{2\int
\left( c+d\right) dt}.  \label{NewNotations}
\end{equation}%
Using%
\begin{equation}
\mu \left( t\right) =y\left( \tau \right) ,\quad k\frac{d\mu }{dt}=\frac{dy}{%
d\tau },\quad q=4abk^{2}=\frac{b}{a}e^{4\int \left( c+d\right) dt},
\label{NewNNotations}
\end{equation}%
one gets%
\begin{equation}
\frac{d}{d\tau }\left[ qy^{2}+\frac{d}{d\tau }\left( y\frac{dy}{d\tau }%
\right) \right] +2qy\frac{dy}{d\tau }=0
\end{equation}%
or%
\begin{equation}
y\left( y^{\prime \prime }+qy\right) ^{\prime }+3y^{\prime }\left( y^{\prime
\prime }+qy\right) =0
\end{equation}%
(see also Ref.~\cite{Lewis:Riesen69} for an important special case). By an
integrating factor:%
\begin{equation}
\frac{d}{d\tau }\left[ y^{3}\left( y^{\prime \prime }+qy\right) \right]
=0,\quad y^{\prime \prime }+qy=\frac{C_{0}}{y^{3}},  \label{TauIntegral}
\end{equation}%
and a back substitution with the help of%
\begin{equation}
\frac{d^{2}y}{d\tau ^{2}}=\frac{e^{3\int \left( c+d\right) dt}}{\left(
2a\right) ^{2}}\ \left[ \kappa ^{\prime \prime }-\frac{a^{\prime }}{a}\kappa
^{\prime }+\left( \left( \frac{a^{\prime }}{a}-c-d\right) \left( c+d\right)
-c^{\prime }-d^{\prime }\right) \kappa \right]
\end{equation}%
results in the required first integral of the system:%
\begin{equation}
\frac{d}{dt}\left[ \frac{\kappa ^{3}}{\left( 2a\right) ^{2}}\left( \kappa
^{\prime \prime }-\frac{a^{\prime }}{a}\kappa ^{\prime }+\left( 4ab+\left( 
\frac{a^{\prime }}{a}-c-d\right) \left( c+d\right) -c^{\prime }-d^{\prime
}\right) \kappa \right) \right] =0,
\end{equation}%
which gives our auxiliary equation (\ref{AuxEquation}).\medskip

The last term in (\ref{BTildeKappa}) can be transformed as follows%
\begin{eqnarray*}
&&\frac{1}{2a}\frac{d}{dt}\left[ e^{\int \left( c+d\right) dt}\ \frac{\kappa 
}{2a}\frac{d}{dt}\left( \kappa e^{-\int \left( c+d\right) dt}\right) \right]
\\
&&\qquad =e^{-2\int \left( c+d\right) dt}\ \frac{d}{d\tau }\left( y\frac{dy}{%
d\tau }\right) =e^{-2\int \left( c+d\right) dt}\ \left( yy^{\prime \prime
}+\left( y^{\prime }\right) ^{2}\right) \\
&&\qquad =e^{-2\int \left( c+d\right) dt}\ \left[ \left( \frac{dy}{d\tau }%
\right) ^{2}-qy^{2}+\frac{C_{0}}{y^{2}}\right] \\
&&\qquad =\left( \frac{\kappa ^{\prime }-\left( c+d\right) \kappa }{2a}%
\right) ^{2}-\frac{b}{a}\kappa ^{2}+\frac{C_{0}}{\kappa ^{2}},
\end{eqnarray*}%
in view of (\ref{NewNotations})--(\ref{NewNNotations}) and (\ref{TauIntegral}%
). We have also utilized a convenient identity%
\begin{equation}
\frac{dy}{d\tau }=\frac{1}{2a}\left( \frac{d\kappa }{dt}-\left( c+d\right)
\kappa \right) e^{\int \left( c+d\right) dt}.
\end{equation}%
Thus%
\begin{equation}
\widetilde{B}=\left( \frac{\kappa ^{\prime }-\left( c+d\right) \kappa }{2a}%
\right) ^{2}+\frac{C_{0}}{\kappa ^{2}}  \label{BBTildeKappa}
\end{equation}%
and the proof is complete.
\end{proof}

The case $a=1/2,$ $b=\omega ^{2}\left( t\right) /2$ and $c=d=0$ corresponds
to the familiar Ermakov--Lewis--Riesenfeld invariant \cite{Ermakov}, \cite%
{Lewis67}, \cite{Lewis68}, \cite{Lewis68a}, \cite{Lewis:Riesen69}; see also
Refs.~\cite{Lewis:Leach82a} and \cite{Lewis:Leach82b}. (The corresponding
classical invariant in general is discussed in Refs.~\cite{Symon70} and \cite%
{Yeon:Lee:Um:George:Pandey93}.) Many examples of completely integrable
generalized harmonic oscillators and their integrals of motion are given in
Ref.~\cite{Cor-Sot:Sua:SusInv}.\medskip

For a positive constant, $C_{0}>0,$ the quantum dynamical invariant (\ref%
{InvSymmForm}) can be presented in the standard harmonic oscillator form 
\cite{Cor-Sot:Sua:Sus}, \cite{Cor-Sot:Sua:SusInv}, \cite{Reed:SimonI}, \cite%
{Reed:SimonII}: 
\begin{equation}
E=\frac{\omega \left( t\right) }{2}\left( \widehat{a}\left( t\right) 
\widehat{a}^{\dagger }\left( t\right) +\widehat{a}^{\dagger }\left( t\right) 
\widehat{a}\left( t\right) \right) ,  \label{EnOperFactor}
\end{equation}%
where%
\begin{equation}
\omega \left( t\right) =\omega _{0}\exp \left( \int_{0}^{t}\left( c-d\right)
\ ds\right) ,\qquad \omega _{0}=2\sqrt{C_{0}}>0,  \label{omega(t)}
\end{equation}%
\begin{eqnarray}
\widehat{a}\left( t\right) &=&\left( \frac{\sqrt{\omega _{0}}}{2\kappa }-i%
\frac{\kappa ^{\prime }-\left( c+d\right) \kappa }{2a\sqrt{\omega _{0}}}%
\right) x+\frac{\kappa }{\sqrt{\omega _{0}}}\frac{\partial }{\partial x},
\label{a(t)} \\
\widehat{a}^{\dagger }\left( t\right) &=&\left( \frac{\sqrt{\omega _{0}}}{%
2\kappa }+i\frac{\kappa ^{\prime }-\left( c+d\right) \kappa }{2a\sqrt{\omega
_{0}}}\right) x-\frac{\kappa }{\sqrt{\omega _{0}}}\frac{\partial }{\partial x%
},  \label{across(t)}
\end{eqnarray}%
and $\kappa $ is a real-valued solution of the nonlinear auxiliary equation (%
\ref{AuxEquation}). Here the time-dependent annihilation $\widehat{a}\left(
t\right) $ and creation $\widehat{a}^{\dagger }\left( t\right) $ operators
satisfy the canonical commutation relation:%
\begin{equation}
\widehat{a}\left( t\right) \widehat{a}^{\dagger }\left( t\right) -\widehat{a}%
^{\dagger }\left( t\right) \widehat{a}\left( t\right) =1.
\label{commutatora(t)across(t)}
\end{equation}%
The oscillator-type spectrum of the dynamical invariant $E$ can be obtained
now in a standard way by using the Heisenberg--Weyl algebra of the rasing
and lowering operators (a \textquotedblleft second
quantization\textquotedblright\ \cite{Lewis:Riesen69}, the Fock states):%
\begin{equation}
\widehat{a}\left( t\right) \Psi _{n}\left( x,t\right) =\sqrt{n}\ \Psi
_{n-1}\left( x,t\right) ,\quad \widehat{a}^{\dagger }\left( t\right) \Psi
_{n}\left( x,t\right) =\sqrt{n+1}\ \Psi _{n+1}\left( x,t\right) ,
\label{annandcratoperactions}
\end{equation}%
\begin{equation}
E\left( t\right) \Psi _{n}\left( x,t\right) =\omega \left( t\right) \left( n+%
\frac{1}{2}\right) \Psi _{n}\left( x,t\right) .  \label{Eeigenvp}
\end{equation}%
The corresponding orthonormal time-dependent eigenfunctions are given by%
\begin{equation}
\Psi _{n}\left( x,t\right) =\exp \left( i\frac{\kappa ^{\prime }-\left(
c+d\right) \kappa }{4a\kappa }x^{2}\right) v_{n}\left( \xi \right)
\label{Eeigenfs}
\end{equation}%
in terms of Hermite polynomials \cite{Ni:Su:Uv}: 
\begin{equation}
v_{n}=C_{n}e^{-\xi ^{2}/2}H_{n}\left( \xi \right) ,\qquad \xi =\varepsilon x,
\label{Eeigenfvs}
\end{equation}%
where%
\begin{equation}
\left\vert C_{n}\right\vert ^{2}=\frac{\varepsilon }{\sqrt{\pi }2^{n}n!}%
,\qquad \varepsilon ^{2}=\frac{\omega _{0}}{2\kappa ^{2}}=\frac{\sqrt{C_{0}}%
}{\kappa ^{2}}.  \label{Eeigenfsconstant}
\end{equation}%
Their relation with the original Cauchy initial value problem is discussed
in Section~7 (see Theorem~2). In addition the $n$-dimensional oscillator
wave functions form a basis of the irreducible unitary representation of the
Lie algebra of the noncompact group $SU\left( 1,1\right) $ corresponding to
the discrete positive series $\mathcal{D}_{+}^{j}$ (see \cite{Me:Co:Su}, 
\cite{Ni:Su:Uv} and \cite{Smir:Shit}).\smallskip\ Operators (\ref{a(t)})--(%
\ref{across(t)}) allow us to extend these group-theoretical properties to
the general quadratic dynamical invariant (\ref{EnOperFactor}).\medskip

When $C_{0}=0,$ the dynamical invariant (\ref{InvSymmForm}) is given as the
square of a linear invariant up to a simple factor, which resembles the case
of a free particle. If $C_{0}<0,$ one deals with the Hamiltonian of a linear
repulsive \textquotedblleft oscillator\textquotedblright .

\section{Relation Between Linear and Quadratic Invariants}

By Lemma~3 the operators $p^{2},$ $x^{2}$ and $px+xp$ form a basis for all
quadratic invariants. Here we take two linearly independent solutions, say $%
\mu _{1}=A_{1}$ and $\mu _{2}=A_{2},$ of equations (\ref{in6}) and (\ref{EqC}%
) and consider the corresponding Dodonov--Malkin--Man'ko--Trifonov
invariants (\ref{GenLinInv}): 
\begin{equation}
P_{1}=A_{1}p+B_{1}x,\qquad P_{2}=A_{2}p+B_{2}x.
\end{equation}%
Introducing the following quadratic invariants%
\begin{equation}
E_{1}=P_{1}^{2}\ e^{-\int_{0}^{t}\left( c-d\right) ds},\quad
E_{2}=P_{2}^{2}\ e^{-\int_{0}^{t}\left( c-d\right) ds},\quad E_{3}=\left(
P_{1}P_{2}+P_{2}P_{1}\right) \ e^{-\int_{0}^{t}\left( c-d\right) ds}
\end{equation}%
with the help of (\ref{dynamicinvariant}) as another basis, one gets%
\begin{eqnarray}
E &=&C_{1}E_{1}+C_{2}E_{2}+C_{3}E_{3}  \label{Linear&Quadratic} \\
&=&\left[ C_{1}P_{1}^{2}+C_{2}P_{2}^{2}+C_{3}\left(
P_{1}P_{2}+P_{2}P_{1}\right) \right] \exp \left( -\int_{0}^{t}\left(
c-d\right) \ ds\right)  \notag
\end{eqnarray}%
for some constants $C_{1},$ $C_{2}$ and $C_{3}.$ As a result the following
operator identity holds%
\begin{eqnarray}
&&\left[ \left( \kappa \ p+\frac{\left( c+d\right) \kappa -\kappa ^{\prime }%
}{2a}\ x\right) ^{2}+\frac{C_{0}}{\kappa ^{2}}\ x^{2}\right] \exp \left(
\int_{0}^{t}\left( c-d\right) \ ds\right)  \label{OpIdentity} \\
&&\quad =\left[ C_{1}\left( A_{1}p+B_{1}x\right) ^{2}+C_{2}\left(
A_{2}p+B_{2}x\right) ^{2}\right.  \notag \\
&&\qquad \left. +C_{3}\left( \left( A_{1}p+B_{1}x\right) \left(
A_{2}p+B_{2}x\right) +\left( A_{2}p+B_{2}x\right) \left(
A_{1}p+B_{1}x\right) \right) \right]  \notag \\
&&\qquad \qquad \times \exp \left( -\int_{0}^{t}\left( c-d\right) \
ds\right) ,  \notag
\end{eqnarray}%
where%
\begin{equation}
A_{1}=\mu _{1},\quad B_{1}=\frac{2c\mu _{1}-\mu _{1}^{\prime }}{2a},\qquad
A_{2}=\mu _{2},\quad B_{2}=\frac{2c\mu _{2}-\mu _{2}^{\prime }}{2a}.
\label{OpIdentityAB}
\end{equation}%
Thus we obtain%
\begin{equation}
\kappa ^{2}=\left( C_{1}\mu _{1}^{2}+C_{2}\mu _{2}^{2}+2C_{3}\mu _{1}\mu
_{2}\right) \exp \left( -2\int_{0}^{t}\left( c-d\right) \ ds\right)
\label{NonlinSuperposition}
\end{equation}%
as a relation between solutions of the nonlinear auxiliary equation (\ref%
{AuxEquation}) and the linear characteristic equation (\ref{EqC}). In
addition the substitution%
\begin{equation}
\mu _{1}=\kappa _{1}\exp \left( \int_{0}^{t}\left( c-d\right) \ ds\right)
,\qquad \mu _{2}=\kappa _{2}\exp \left( \int_{0}^{t}\left( c-d\right) \
ds\right)  \label{MuKappaSubstitution}
\end{equation}%
transforms the characteristic equation (\ref{EqC}) into our auxiliary
equation (\ref{AuxEquation}) with $C_{0}=0.$ Finally a general solution of
the nonlinear equation is given by the following \textquotedblleft operator
law of cosines\textquotedblright :%
\begin{equation}
\kappa ^{2}\left( t\right) =C_{1}\kappa _{1}^{2}\left( t\right) +C_{2}\kappa
_{2}^{2}\left( t\right) +2C_{3}\kappa _{1}\left( t\right) \kappa _{2}\left(
t\right)  \label{AuxSol}
\end{equation}%
in terms of two linearly independent solutions $\kappa _{1}$ and $\kappa
_{2} $ of the homogeneous equation. The constant $C_{0}$ is related to the
Wronskian of two linearly independent solutions $\kappa _{1}$ and $\kappa
_{2}:$ 
\begin{equation}
C_{1}C_{2}-C_{3}^{2}=C_{0}\frac{\left( 2a\right) ^{2}}{W^{2}\left( \kappa
_{1},\kappa _{2}\right) },\quad W\left( \kappa _{1},\kappa _{2}\right)
=\kappa _{1}\kappa _{2}^{\prime }-\kappa _{1}^{\prime }\kappa _{2}
\label{AuxSolWronskian}
\end{equation}%
(more details are given in Appendix~A). This is a well-known nonlinear
superposition property of the so-called Ermakov systems (see, for example, 
\cite{Corant:Snyder58}, \cite{Eliezer:Gray76}, \cite{Ermakov}, \cite%
{Leach:Andrio08}, \cite{Leach:Andriopo08}, \cite{Leach:Kar:Nuc:Andrio05}, 
\cite{Lewis68a}, \cite{Mah:Leach07}, \cite{Nucci:Leach05}, \cite%
{PadillaMaster}, \cite{Pinney50}, \cite{Schuch08} and references therein).
Here we have obtained this \textquotedblleft nonlinear superposition
principle\textquotedblright\ (or Pinney's solution) in an operator form by
multiplication and addition of the linear dynamical invariants together with
an independent characterization of all quantum quadratic invariants, which
seems to be missing, in general, in the available literature (see also \cite%
{Lewis67} and \cite{Lewis68a} for an important classical case). An extension
is given in the last section.\medskip

It is worth noting, in conclusion, that the linear invariants of Dodonov,
Malkin, Man'ko and Trifonov (\cite{Dod:Mal:Man75}, \cite%
{Dodonov:Man'koFIAN87}, \cite{Malkin:Man'ko79} and \cite{Malk:Man:Trif73})
can be presented as follows%
\begin{eqnarray}
P_{1} &=&\left( \kappa _{1}p+\frac{\left( c+d\right) \kappa _{1}-\kappa
_{1}^{\prime }}{2a}x\right) \exp \left( \int_{0}^{t}\left( c-d\right) \
ds\right) , \\
P_{2} &=&\left( \kappa _{2}p+\frac{\left( c+d\right) \kappa _{2}-\kappa
_{2}^{\prime }}{2a}x\right) \exp \left( \int_{0}^{t}\left( c-d\right) \
ds\right)
\end{eqnarray}%
in terms of two linearly independent solutions, $\kappa _{1}$ and $\kappa
_{2},$ of the homogeneous equation (\ref{AuxEquation}) when $C_{0}=0.$
Comparing these expressions with the form of the quadratic invariant (\ref%
{InvSymmForm}) at $C_{0}=0$ (no \textquotedblleft
potential\textquotedblright , a \textquotedblleft free
particle\textquotedblright ), when the operator square is complete, one can
treat the linear invariants as \textquotedblleft operator square
roots\textquotedblright\ \cite{Deb:Mikus} of the special quadratic
invariants (see also Lemma~2 regarding a convenient common factor).

\section{Quadratic Invariants and Cauchy Initial Value Problem}

Our decomposition (\ref{Linear&Quadratic}) of the quantum quadratic
invariant in terms of products of the linear ones not only results in the
Pinney solution (\ref{AuxSol})--(\ref{AuxSolWronskian}) of the corresponding
generalized Ermakov system (\ref{AuxEquation}) in a form of an
\textquotedblleft operator law of cosines\textquotedblright , but also
provides a somewhat better understanding, with the help of Lemma~1 and
properties of the linear invariants discussed in Section~4, how the
quadratic invariants act on solutions of the original time-dependent Schr%
\"{o}dinger equation. Indeed by (\ref{KSolution}) for two different
solutions, say $A_{1}=\mu _{1}$ and $A_{2}=\mu _{2},$ of the characteristic
equation, (\ref{EqC}), we have in an operator form:%
\begin{equation}
\psi \left( x,t\right) =K_{1}\left( t\right) \left[ \chi _{1}\left( y\right) %
\right] =K_{2}\left( t\right) \left[ \chi _{2}\left( y\right) \right]
\label{K12Solution}
\end{equation}%
in view of uniqueness of the Cauchy initial value problem (see, for example,
Refs.~\cite{Cor-Sot:Sua:SusInv} and \cite{Suaz:Sus}). Then by (\ref%
{SecondKSolution}):%
\begin{equation}
P_{1}\psi =e^{\int_{0}^{t}\left( c-d\right) ds}\ K_{1}\left( y\chi
_{1}\right) ,\quad P_{2}\psi =e^{\int_{0}^{t}\left( c-d\right) ds}\
K_{2}\left( y\chi _{2}\right)  \label{P12Act}
\end{equation}%
and%
\begin{eqnarray}
E\psi &=&e^{-\int_{0}^{t}\left( c-d\right) ds}\ \left[ C_{1}P_{1}^{2}\psi
+C_{2}P_{2}^{2}\psi +C_{3}\left( P_{1}P_{2}+P_{2}P_{1}\right) \psi \right]
\label{EAct} \\
&=&e^{\int_{0}^{t}\left( c-d\right) ds}\ \left[ C_{1}K_{1}\left( y^{2}\chi
_{1}\right) +C_{2}K_{2}\left( y^{2}\chi _{2}\right) \right]  \notag \\
&&+C_{3}\left[ P_{1}\left( K_{2}\left( y\chi _{2}\right) \right)
+P_{2}\left( K_{1}\left( y\chi _{1}\right) \right) \right] .  \notag
\end{eqnarray}%
We have%
\begin{eqnarray*}
&&K_{2}\left( y\chi _{2}\right) =K_{1}\left( \varphi _{1}\right) ,\qquad
\varphi _{1}=K_{1}^{-1}\left( 0\right) \left[ K_{2}\left( 0\right) \left(
y\chi _{2}\right) \right] , \\
&&K_{1}\left( y\chi _{1}\right) =K_{2}\left( \varphi _{2}\right) ,\qquad
\varphi _{2}=K_{2}^{-1}\left( 0\right) \left[ K_{1}\left( 0\right) \left(
y\chi _{1}\right) \right]
\end{eqnarray*}%
with the help of an analog of the Fourier transform. Therefore%
\begin{equation*}
P_{1}\left( K_{2}\left( y\chi _{2}\right) \right) =e^{\int_{0}^{t}\left(
c-d\right) ds}\ K_{1}\left( y\varphi _{1}\right) ,\quad P_{2}\left(
K_{1}\left( y\chi _{1}\right) \right) =e^{\int_{0}^{t}\left( c-d\right) ds}\
K_{2}\left( y\varphi _{2}\right) ,
\end{equation*}%
and finally%
\begin{equation}
E\psi =e^{\int_{0}^{t}\left( c-d\right) ds}\ \left[ C_{1}K_{1}\left(
y^{2}\chi _{1}\right) +C_{2}K_{2}\left( y^{2}\chi _{2}\right) +C_{3}\left(
K_{1}\left( y\varphi _{1}\right) +K_{2}\left( y\varphi _{2}\right) \right) %
\right] ,  \label{ESolution}
\end{equation}%
where each term satisfies the corresponding Schr\"{o}dinger equation. By the
superposition principle we arrive at a new solution in a complete agreement
with our Lemma~1. The corresponding initial data follow from (\ref{ESolution}%
) at $t=0$ and the time-evolution operator (\ref{CauchyInVProb}) can be
applied (see also Eq.~(\ref{SpectralDecomposition}) for an eigenfunction
expansion).\medskip

On the second hand one can always expand a square-integrable solution (\ref%
{KSolution}) of the Cauchy initial value problem in the standard form%
\begin{eqnarray}
\psi \left( x,t\right) &=&\int_{-\infty }^{\infty }K\left( x,y,t\right) \
\chi \left( y\right) \ dy  \label{KEexpansion} \\
&=&\sum_{n=0}^{\infty }c_{n}\left( t\right) \ \Psi _{n}\left( x,t\right) , 
\notag
\end{eqnarray}%
where%
\begin{equation}
c_{n}\left( t\right) =\int_{-\infty }^{\infty }\Psi _{n}^{\ast }\left(
x,t\right) \psi \left( x,t\right) \ dx  \label{KEexpansionConst}
\end{equation}%
(we use the asterisk for complex conjugate) by the Riesz--Fisher theorem 
\cite{Reed:SimonI}, \cite{Reed:SimonII}, \cite{Rudin} due to completeness of
the eigenfunctions (\ref{Eeigenfs})--(\ref{Eeigenfsconstant}) at all times.
Then by the Fubini theorem:%
\begin{equation}
c_{n}\left( t\right) =\int_{-\infty }^{\infty }\chi \left( y\right) \left(
\int_{-\infty }^{\infty }\Psi _{n}^{\ast }\left( x,t\right) K\left(
x,y,t\right) \ \ dx\right) \ dy  \label{KEexpansionConstant}
\end{equation}%
and the second integral can be evaluated with the help of Eqs.~(\ref{KKernel}%
), (\ref{Eeigenfs})--(\ref{Eeigenfsconstant}) and (\ref{Erd}) as follows%
\begin{eqnarray}
&&\int_{-\infty }^{\infty }\Psi _{n}^{\ast }\left( x,t\right) K\left(
x,y,t\right) \ \ dx  \label{KPsiIntegral} \\
&&\ =\frac{i^{n}e^{-i\left( n+1/2\right) \varphi }}{\left( \sqrt{\pi }%
2^{n}n!\right) ^{1/2}}\left( \frac{\sqrt{C_{0}}}{C_{0}\left( \frac{\kappa
_{1}}{\kappa }\right) ^{2}+\left( \frac{\kappa _{1}\kappa ^{\prime }-\kappa
_{1}^{\prime }\kappa }{2a}\right) ^{2}}\right) ^{1/4}\exp \left( \frac{1}{2}%
\int_{0}^{t}\left( d-c\right) \ ds\right)  \notag \\
&&\quad \times \exp \left( i\left( \gamma +\frac{\beta ^{2}\left( 0\right)
\kappa _{1}^{2}\left( 0\right) }{4a\kappa _{1}}\frac{\kappa \left( \kappa
_{1}\kappa ^{\prime }-\kappa _{1}^{\prime }\kappa \right) }{C_{0}\left( 
\frac{\kappa _{1}}{\kappa }\right) ^{2}+\left( \frac{\kappa _{1}\kappa
^{\prime }-\kappa _{1}^{\prime }\kappa }{2a}\right) ^{2}}\right) y^{2}\right)
\notag \\
&&\quad \times \exp \left( -\frac{\beta ^{2}\left( 0\right) \kappa
_{1}^{2}\left( 0\right) \sqrt{C_{0}}}{C_{0}\left( \frac{\kappa _{1}}{\kappa }%
\right) ^{2}+\left( \frac{\kappa _{1}\kappa ^{\prime }-\kappa _{1}^{\prime
}\kappa }{2a}\right) ^{2}}\frac{y^{2}}{2}\right) H_{n}\left( \frac{\beta
\left( 0\right) \kappa _{1}\left( 0\right) C_{0}^{1/4}y}{\sqrt{C_{0}\left( 
\frac{\kappa _{1}}{\kappa }\right) ^{2}+\left( \frac{\kappa _{1}\kappa
^{\prime }-\kappa _{1}^{\prime }\kappa }{2a}\right) ^{2}}}\right) .  \notag
\end{eqnarray}%
Here $\kappa _{1}=\mu \exp \left( \int_{0}^{t}\left( d-c\right) \ ds\right) $
and $\kappa $ are the corresponding solutions of auxiliary equation (\ref%
{AuxEquation}) with $C_{0}=0$ and $C_{0}\neq 0,$ respectively, with the
Wronskian $W\left( \kappa _{1},\kappa \right) =\kappa _{1}\kappa ^{\prime
}-\kappa _{1}^{\prime }\kappa $ and 
\begin{equation}
\tan \varphi =\frac{\kappa }{\kappa _{1}}\frac{\kappa _{1}\kappa ^{\prime
}-\kappa _{1}^{\prime }\kappa }{2a\sqrt{C_{0}}},  \label{tangent}
\end{equation}%
\begin{equation}
C_{0}\left( \frac{\kappa _{1}}{\kappa }\right) ^{2}+\left( \frac{\kappa
_{1}\kappa ^{\prime }-\kappa _{1}^{\prime }\kappa }{2a}\right) ^{2}=constant
\label{constant}
\end{equation}%
by (\ref{A4}). (The computational details are left to the reader; our
equation (\ref{constant}) is equivalent to the classical Ermakov invariant 
\cite{Ermakov}.) We may choose $\beta \left( 0\right) \kappa _{1}\left(
0\right) =1$ and arrive at the following result.

\begin{theorem}
\textbf{(Eigenfunction Expansions)} Solution of the Cauchy initial value
problem (\ref{KSolution}) in $L^{2}\left( 
%TCIMACRO{\U{211d} }%
%BeginExpansion
\mathbb{R}
%EndExpansion
\right) $ can be obtained as an infinite series of multiples of the
quadratic invariant eigenfunctions (\ref{Eeigenfs}):%
\begin{equation}
\psi \left( x,t\right) =\sum_{n=0}^{\infty }c_{n}\left( t\right) \ \Psi
_{n}\left( x,t\right) ,  \label{QIExpSolution}
\end{equation}%
where the time-dependent coefficients are given by%
\begin{eqnarray}
c_{n}\left( t\right) &=&i^{n}\left( \frac{\delta }{\sqrt{\pi }2^{n}n!}%
\right) ^{1/2}e^{-i\left( n+1/2\right) \varphi }\exp \left( \frac{1}{2}%
\int_{0}^{t}\left( d-c\right) \ ds\right)  \label{c(t)constant} \\
&&\times \int_{-\infty }^{\infty }\exp \left( i\xi y^{2}\right) e^{-\delta
^{2}y^{2}/2}H_{n}\left( \delta y\right) \chi \left( y\right) \ dy.  \notag
\end{eqnarray}%
Here $\kappa _{1}\left( t\right) $and $\kappa \left( t\right) $ are
real-valued solutions of the homogeneous and nonhomogeneous auxiliary
equations (\ref{AuxEquation}), respectively, with the Wronskian $W\left(
t\right) =\kappa _{1}\kappa ^{\prime }-\kappa _{1}^{\prime }\kappa .$ The
phases $\varphi \left( t\right) $ and $\gamma \left( t\right) $ are
determined in terms of these solutions as follows%
\begin{equation}
\varphi =\arctan \left( \frac{\kappa }{\kappa _{1}}\frac{\kappa _{1}\kappa
^{\prime }-\kappa _{1}^{\prime }\kappa }{2a\sqrt{C_{0}}}\right) ,\quad \frac{%
d\varphi }{dt}=2\sqrt{C_{0}}\frac{a}{\kappa ^{2}},  \label{phase}
\end{equation}%
\begin{equation}
\frac{d\gamma }{dt}+\frac{a}{\kappa _{1}^{2}}=0  \label{gamma(t)}
\end{equation}%
and%
\begin{equation}
\delta =\frac{C_{0}^{1/4}}{\sqrt{C_{0}\left( \frac{\kappa _{1}}{\kappa }%
\right) ^{2}+\left( \frac{\kappa _{1}\kappa ^{\prime }-\kappa _{1}^{\prime
}\kappa }{2a}\right) ^{2}}}>0,  \label{deltaconstant}
\end{equation}%
\begin{equation}
\xi =\gamma +\frac{\delta ^{2}}{2\sqrt{C_{0}}}\frac{\left( \kappa _{1}\kappa
^{\prime }-\kappa _{1}^{\prime }\kappa \right) \kappa }{2a\kappa _{1}}
\label{xiconstant}
\end{equation}%
are constants. A spectral decomposition of the quadratic invariant $E$ in
the space of $L^{2}$-solutions is given by%
\begin{equation}
E\left( t\right) \psi \left( x,t\right) =\omega \left( t\right)
\sum_{n=0}^{\infty }c_{n}\left( t\right) \left( n+\frac{1}{2}\right) \ \Psi
_{n}\left( x,t\right) ,  \label{SpectralDecomposition}
\end{equation}%
where the \textquotedblleft frequency\textquotedblright\ $\omega \left(
t\right) $ is defined by (\ref{omega(t)}).
\end{theorem}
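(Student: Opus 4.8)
The plan is to establish the expansion \eqref{QIExpSolution} by combining three ingredients that are already in hand: (i) the completeness at each fixed $t$ of the oscillator eigenfunctions $\Psi_n(x,t)$ of \eqref{Eeigenfs}, which is the Riesz--Fisher theorem applied to the orthonormal basis built from Hermite functions; (ii) the explicit integral representation \eqref{KSolution} of the solution with kernel $K(x,y,t)$ from Theorem~1; and (iii) the overlap integral \eqref{KPsiIntegral}, whose evaluation is the computational heart of the argument. Since all three are provided earlier, the write-up is mostly a matter of assembling them carefully. First I would fix $t$ and expand $\psi(\cdot,t)\in L^2(\mathbb{R})$ in the orthonormal system $\{\Psi_n(\cdot,t)\}$, giving \eqref{KEexpansion}--\eqref{KEexpansionConst}; then substitute the integral form \eqref{KSolution} for $\psi$ and interchange the order of integration by Fubini (legitimate because $\chi$ is taken in a suitable dense class and the Gaussian kernels decay), arriving at \eqref{KEexpansionConstant}. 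The inner integral $\int_{-\infty}^{\infty}\Psi_n^{*}(x,t)K(x,y,t)\,dx$ is then a Gaussian-times-Hermite integral in $x$, which is computed once and for all using the generating-function / Erdélyi-type formula (\ref{Erd}) cited in the text; the result is \eqref{KPsiIntegral}.

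The second step is purely algebraic bookkeeping: one rewrites the prefactors and the arguments appearing in \eqref{KPsiIntegral} in terms of the two auxiliary solutions $\kappa_1$ (the $C_0=0$ solution, i.e.\ $\kappa_1=\mu\exp\int_0^t(d-c)\,ds$) and $\kappa$ (the $C_0\neq0$ solution). The key simplification is the identity \eqref{constant}, namely that $C_0(\kappa_1/\kappa)^2+\bigl((\kappa_1\kappa'-\kappa_1'\kappa)/2a\bigr)^2$ is constant along the flow — this is exactly the classical Ermakov invariant, and follows by differentiating and using \eqref{AuxEquation} for both $\kappa$ and $\kappa_1$ (equivalently, it is relation (\ref{A4})). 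Using this constancy one sees that the quantities $\delta$ of \eqref{deltaconstant} and $\xi$ of \eqref{xiconstant} are indeed time-independent, and with the normalization choice $\beta(0)\kappa_1(0)=1$ the formula \eqref{KPsiIntegral} collapses to precisely the $y$-integral displayed in \eqref{c(t)constant}. The phase relation $\varphi'=2\sqrt{C_0}\,a/\kappa^2$ in \eqref{phase} comes from differentiating $\tan\varphi=(\kappa/\kappa_1)(\kappa_1\kappa'-\kappa_1'\kappa)/(2a\sqrt{C_0})$ and again invoking \eqref{AuxEquation} together with the constancy \eqref{constant}; the relation \eqref{gamma(t)} for $\gamma$ is read off from the definition of $\gamma(t)$ in Theorem~1 specialized to the solution $\mu$ underlying $\kappa_1$.

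Finally, for the spectral decomposition \eqref{SpectralDecomposition} one simply applies $E(t)$ term by term to the series \eqref{QIExpSolution} and uses the eigenvalue relation \eqref{Eeigenvp}, $E(t)\Psi_n=\omega(t)(n+\tfrac12)\Psi_n$, together with continuity of $E(t)$ on the relevant dense domain so that the action commutes with the infinite sum; the factor $\omega(t)$ is then just \eqref{omega(t)}. I would close by noting that convergence of \eqref{QIExpSolution} is in $L^2$ for each $t$ by construction, and that agreement with the propagator representation \eqref{CauchyInVProb} at $t=0$ fixes the initial data, so no separate uniqueness argument is needed beyond what is already cited.

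I expect the main obstacle to be the second step — the algebraic reduction of the raw Gaussian integral \eqref{KPsiIntegral} to the clean form \eqref{c(t)constant}. One must track several time-dependent amplitudes and quadratic-in-$y$ exponents through the change of variables and repeatedly exploit the Ermakov invariant \eqref{constant} and the interrelations among $\alpha,\beta,\gamma$ and $\mu$ from Theorem~1; verifying that $\delta$ and $\xi$ are genuinely constant (rather than merely constant up to terms that cancel later) is the delicate point. Everything else — the Fubini interchange, the Hermite integral, and the termwise action of $E(t)$ — is routine given the results already established.
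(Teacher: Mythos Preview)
Your proposal is correct and follows essentially the same route as the paper: expand $\psi(\cdot,t)$ in the complete orthonormal system $\{\Psi_n(\cdot,t)\}$ via Riesz--Fisher, apply Fubini to reach \eqref{KEexpansionConstant}, evaluate the inner Gaussian--Hermite integral by \eqref{Erd} to obtain \eqref{KPsiIntegral}, and then use the Ermakov invariant \eqref{constant}/\eqref{A4} together with the normalization $\beta(0)\kappa_1(0)=1$ to reduce to \eqref{c(t)constant}. The only cosmetic difference is that the paper derives the second relation in \eqref{phase} from \eqref{PhaseConstant} combined with \eqref{tangent} and \eqref{gamma(t)}, whereas you differentiate $\tan\varphi$ directly; these are equivalent manipulations.
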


It is worth noting that%
\begin{equation}
\frac{d}{dt}\left[ \gamma +\frac{\delta ^{2}}{2\sqrt{C_{0}}}\frac{\kappa }{%
\kappa _{1}}\left( \frac{W}{2a}\right) \right] =0  \label{PhaseConstant}
\end{equation}%
by (\ref{gamma(t)}) and (\ref{A6}), which means that the phase factor (\ref%
{xiconstant}) in front of $y^{2}$ in the second integral of Eq.~(\ref%
{c(t)constant}) is indeed a constant. The second equation (\ref{phase})
follows from (\ref{PhaseConstant}) with the help of (\ref{tangent}) and (\ref%
{gamma(t)}).\medskip

Finally by choosing in Eqs.~(\ref{QIExpSolution})--(\ref{c(t)constant}) a
special (square) integrable initial data of the form%
\begin{equation}
\chi _{m}\left( y\right) =\exp \left( -i\xi y^{2}\right) e^{-\delta
^{2}y^{2}/2}H_{m}\left( \delta y\right)  \label{chispecial}
\end{equation}%
we conclude in view of the orthogonality property of Hermite polynomials
that the time-dependent wave functions%
\begin{eqnarray}
\psi _{m}\left( x,t\right) &=&D_{m}\exp \left( \frac{1}{2}\int_{0}^{t}\left(
d-c\right) \ ds\right)  \label{psispecial} \\
&&\times e^{-i\left( m+1/2\right) \varphi \left( t\right) }\ \Psi _{m}\left(
x,t\right)  \notag
\end{eqnarray}%
are particular solutions of the Schr\"{o}dinger equation (\ref{Schroedinger}%
)--(\ref{GenHam}) for arbitrary constants $D_{m}:$%
\begin{equation}
i\frac{\partial \psi _{m}}{\partial t}=H\psi _{m},\qquad E\psi _{m}=\omega
\left( m+\frac{1}{2}\right) \psi _{m}.  \label{Schroedingereigenfunctions}
\end{equation}%
They are also eigenfunctions of the quadratic invariant $E$\ corresponding
to the discrete \textquotedblleft spectrum\textquotedblright :%
\begin{equation}
\left\langle E\right\rangle _{m}=\int_{-\infty }^{\infty }\psi _{m}^{\ast
}E\psi _{m}\ dx=\left\vert D_{m}\right\vert ^{2}\omega _{0}\left( m+\frac{1}{%
2}\right) ,  \label{psispectrum}
\end{equation}%
see (\ref{Eeigenvp}). (It can be verified by direct substitution; the
details are left to the reader.) The explicit wave functions (\ref%
{psispecial}) are derived here without separation of the variables with the
aid of our Theorem~2 and certain variants of the Ermakov invariant. If $%
\left\vert D_{m}\right\vert =1,$ solution (\ref{QIExpSolution}) takes the
form%
\begin{equation}
\psi \left( x,t\right) =\sum_{n=0}^{\infty }i^{n}e^{-i\left( n+1/2\right)
\varphi \left( 0\right) }\ \psi _{n}\left( x,t\right) \int_{-\infty
}^{\infty }\psi _{n}^{\ast }\left( y,0\right) \chi \left( y\right) \ dy
\label{wavefunctionexpan}
\end{equation}%
in terms of the time-dependent wave functions (\ref{psispecial}) provided
that%
\begin{equation}
\varepsilon \left( 0\right) =\frac{C_{0}^{1/4}}{\kappa \left( 0\right) }%
=\delta ,\quad \frac{\left( c\left( 0\right) +d\left( 0\right) \right)
\kappa \left( 0\right) -\kappa ^{\prime }\left( 0\right) }{4a\left( 0\right)
\kappa \left( 0\right) }=\xi  \label{inconsts}
\end{equation}%
(see also \cite{Yeon:Lee:Um:George:Pandey93} for the path-integral method).
The traditional operator approach (for the parametric oscillator) is
presented in \cite{Lewis:Riesen69} and/or elsewhere (see also \cite%
{Andrio:Leach05} and \cite{Leach90}).

\section{A General Nonlinear Superposition Principle for Ermakov's Equations}

The Pinney superposition formula (\ref{AuxSol})--(\ref{AuxSolWronskian})
allows one to construct solutions of the nonlinear auxiliary equation (\ref%
{AuxEquation}) in terms of given solutions of the corresponding linear
equation. In general we take two different solutions, say $\kappa _{1}$ and $%
\kappa _{2},$ of the generalized Ermakov equations (\ref{AuxEquation}) with $%
C_{0}^{\left( 1,2\right) }\neq 0$ and consider two quadratic invariants:%
\begin{eqnarray}
E_{1}\left( t\right) &=&\left[ \left( \kappa _{1}\ p+\frac{\left( c+d\right)
\kappa _{1}-\kappa _{1}^{\prime }}{2a}\ x\right) ^{2}+\frac{C_{0}^{\left(
1\right) }}{\kappa _{1}^{2}}\ x^{2}\right] \lambda \left( t\right) , \\
E_{2}\left( t\right) &=&\left[ \left( \kappa _{2}\ p+\frac{\left( c+d\right)
\kappa _{2}-\kappa _{2}^{\prime }}{2a}\ x\right) ^{2}+\frac{C_{0}^{\left(
2\right) }}{\kappa _{2}^{2}}\ x^{2}\right] \lambda \left( t\right) .
\end{eqnarray}%
Their arbitrary linear combination,%
\begin{equation}
D_{1}E_{1}\left( t\right) +D_{2}E_{2}\left( t\right) =E\left( t\right)
\end{equation}%
($D_{1}$ and $D_{2}$ are constants), is also a quadratic invariant given by (%
\ref{InvSymmForm}) for a certain solution $\kappa $ of the nonlinear
auxiliary equation (\ref{AuxEquation}). Thus the following operator identity
holds%
\begin{eqnarray}
&&\left( \kappa \ p+\frac{\left( c+d\right) \kappa -\kappa ^{\prime }}{2a}\
x\right) ^{2}+\frac{C_{0}}{\kappa ^{2}}\ x^{2}
\label{QuadraticInvariantIdentity} \\
&&\qquad =D_{1}\left[ \left( \kappa _{1}\ p+\frac{\left( c+d\right) \kappa
_{1}-\kappa _{1}^{\prime }}{2a}\ x\right) ^{2}+\frac{C_{0}^{\left( 1\right) }%
}{\kappa _{1}^{2}}\ x^{2}\right]  \notag \\
&&\qquad \quad +D_{2}\left[ \left( \kappa _{2}\ p+\frac{\left( c+d\right)
\kappa _{2}-\kappa _{2}^{\prime }}{2a}\ x\right) ^{2}+\frac{C_{0}^{\left(
2\right) }}{\kappa _{2}^{2}}\ x^{2}\right]  \notag
\end{eqnarray}%
and in a similar fashion we arrive at a general nonlinear superposition
principle for the solutions of Ermakov's equations (\ref{AuxEquation}):%
\begin{equation}
\kappa ^{2}\left( t\right) =D_{1}\kappa _{1}^{2}\left( t\right) +D_{2}\kappa
_{2}^{2}\left( t\right) ,  \label{GenNonLinSupPrinciple}
\end{equation}%
where the constant $C_{0}$ is given by%
\begin{equation}
C_{0}-C_{0}^{\left( 1\right) }D_{1}^{2}-C_{0}^{\left( 2\right)
}D_{2}^{2}=D_{1}D_{2}\left[ \frac{W^{2}\left( \kappa _{1},\kappa _{2}\right) 
}{\left( 2a\right) ^{2}}+C_{0}^{\left( 1\right) }\frac{\kappa _{2}^{2}}{%
\kappa _{1}^{2}}+C_{0}^{\left( 2\right) }\frac{\kappa _{1}^{2}}{\kappa
_{2}^{2}}\right] \text{\label{Constant}}
\end{equation}%
with $W\left( \kappa _{1},\kappa _{2}\right) $ being the Wronskian of two
solutions (see Appendix~A). One can also derive this property by adding the
corresponding solutions (\ref{SolA})--(\ref{SolC}) of the original linear
system (\ref{EqA})--(\ref{EqCOne}) or with the help of the Pinney formula (%
\ref{AuxSol})--(\ref{AuxSolWronskian}). The details are left to the
reader.\medskip

\noindent \textbf{Acknowledgments.\/} The author thanks George E. Andrews,
Carlos Castillo-Ch\'{a}vez, Victor V. Dodonov, Vladimir~I. Man'ko, Svetlana
Roudenko and Kurt Bernardo Wolf for support, valuable comments and
encouragement. I am grateful to Peter~G.~L.~Leach for careful reading of the
manuscript --- his numerous suggestions have helped to improve the
presentation.

\appendix

\section{Required Transformations}

Equating coefficients in front of the operators $p^{2},$ $x^{2}$ and $px+xp$
in (\ref{OpIdentity})--(\ref{OpIdentityAB}) with the help of (\ref%
{MuKappaSubstitution}) one gets%
\begin{equation}
\kappa ^{2}=C_{1}\kappa _{1}^{2}+C_{2}\kappa _{2}^{2}+2C_{3}\kappa
_{1}\kappa _{2},  \label{A1}
\end{equation}%
\begin{eqnarray}
&&\left[ \frac{\left( c+d\right) \kappa -\kappa ^{\prime }}{2a}\right] ^{2}+%
\frac{C_{0}}{\kappa ^{2}}  \label{A2} \\
&&\quad =C_{1}\left[ \frac{\left( c+d\right) \kappa _{1}-\kappa _{1}^{\prime
}}{2a}\right] ^{2}+C_{2}\left[ \frac{\left( c+d\right) \kappa _{2}-\kappa
_{2}^{\prime }}{2a}\right] ^{2}  \notag \\
&&\qquad +2C_{3}\frac{\left[ \left( c+d\right) \kappa _{1}-\kappa
_{1}^{\prime }\right] \left[ \left( c+d\right) \kappa _{2}-\kappa
_{2}^{\prime }\right] }{\left( 2a\right) ^{2}},  \notag
\end{eqnarray}%
\begin{eqnarray}
&&\kappa \frac{\left( c+d\right) \kappa -\kappa ^{\prime }}{2a}  \label{A3}
\\
&&\quad =C_{1}\kappa _{1}\frac{\left( c+d\right) \kappa _{1}-\kappa
_{1}^{\prime }}{2a}+C_{2}\kappa _{2}\frac{\left( c+d\right) \kappa
_{2}-\kappa _{2}^{\prime }}{2a}  \notag \\
&&\qquad +C_{3}\left[ \kappa _{1}\frac{\left( c+d\right) \kappa _{2}-\kappa
_{2}^{\prime }}{2a}+\kappa _{2}\frac{\left( c+d\right) \kappa _{1}-\kappa
_{1}^{\prime }}{2a}\right] ,  \notag
\end{eqnarray}%
respectively. Multiply (\ref{A2}) by (\ref{A1}) and use (\ref{A3}) in the
left hand side in order to obtain (\ref{AuxSolWronskian}) as a result of
elementary but rather tedious calculations.\medskip

A general nonlinear superposition principle for Ermakov's equations, given
by (\ref{GenNonLinSupPrinciple})--(\ref{Constant}), is derived from the
quadratic invariant identity (\ref{QuadraticInvariantIdentity}) in a similar
fashion. In addition the constant in the right hand side of (\ref{Constant})
(see also (\ref{constant})) can be verified by direct differentiation as
follows%
\begin{eqnarray}
&&\frac{d}{dt}\left[ \left( \frac{W}{2a}\right) ^{2}+C_{0}^{\left( 1\right)
}\left( \frac{\kappa _{2}}{\kappa _{1}}\right) ^{2}+C_{0}^{\left( 2\right)
}\left( \frac{\kappa _{1}}{\kappa _{2}}\right) ^{2}\right]  \label{A4} \\
&&\quad =\frac{2W}{\left( 2a\right) ^{2}}\left( W^{\prime }-\frac{a^{\prime }%
}{a}W\right) +2W\left( C_{0}^{\left( 1\right) }\frac{\kappa _{2}}{\kappa
_{1}^{3}}-C_{0}^{\left( 2\right) }\frac{\kappa _{1}}{\kappa _{2}^{3}}\right)
\notag \\
&&\quad =\frac{2W}{\left( 2a\right) ^{2}}\left( W^{\prime }-\frac{a^{\prime }%
}{a}W\right)  \notag \\
&&\qquad -\frac{2W}{\left( 2a\right) ^{2}}\left[ \left( \kappa _{1}\kappa
_{2}^{\prime }-\kappa _{1}^{\prime }\kappa _{2}\right) ^{\prime }-\frac{%
a^{\prime }}{a}\left( \kappa _{1}\kappa _{2}^{\prime }-\kappa _{1}^{\prime
}\kappa _{2}\right) \right] =0  \notag
\end{eqnarray}%
with the help of auxiliary equations (\ref{AuxEquation}). (It is a natural
generalization of the classical Ermakov invariant \cite{Ermakov}.) En route,
we derive an identity:%
\begin{equation}
\frac{1}{2a}\left( \frac{W}{2a}\right) ^{\prime }+C_{0}^{\left( 1\right) }%
\frac{\kappa _{2}}{\kappa _{1}^{3}}-C_{0}^{\left( 2\right) }\frac{\kappa _{1}%
}{\kappa _{2}^{3}}=0,  \label{A5}
\end{equation}%
which may be considered as an extension of the familiar Abel theorem,
occurring when $C_{0}^{\left( 1\right) }=C_{0}^{\left( 2\right) }=0,$ to the
nonlinear Ermakov equations. Another identity,%
\begin{eqnarray}
&&\frac{d}{dt}\left[ \frac{\kappa _{2}}{\kappa _{1}}\left( \frac{W}{2a}%
\right) \right]  \label{A6} \\
&&\quad =\frac{2a}{\kappa _{1}^{2}}\left[ \left( \frac{W}{2a}\right)
^{2}-C_{0}^{\left( 1\right) }\left( \frac{\kappa _{2}}{\kappa _{1}}\right)
^{2}+C_{0}^{\left( 2\right) }\left( \frac{\kappa _{1}}{\kappa _{2}}\right)
^{2}\right] ,  \notag
\end{eqnarray}%
has been useful in Section~7 with $C_{0}^{\left( 1\right) }=0,$ $%
C_{0}^{\left( 2\right) }=C_{0}$ and $\kappa _{2}=\kappa $ when the right
hand side reduces to a multiple of the Ermakov invariant (see Eq.~(\ref%
{PhaseConstant})).

\section{An Integral Evaluation}

In Section~7 (see Eq.~(\ref{KPsiIntegral})) we use the following integral%
\begin{eqnarray}
&&\int_{-\infty }^{\infty }e^{-\lambda ^{2}\left( x-y\right)
^{2}}H_{n}\left( ay\right) \ dy  \label{Erd} \\
&&\quad =\frac{\sqrt{\pi }}{\lambda ^{n+1}}\left( \lambda ^{2}-a^{2}\right)
^{n/2}H_{n}\left( \frac{\lambda ax}{\left( \lambda ^{2}-a^{2}\right) ^{1/2}}%
\right) ,\quad \func{Re}\lambda ^{2}>0,  \notag
\end{eqnarray}%
which is equivalent to Eq.~(30) on page 195 of Vol.~2 of Ref.~\cite{Erd}
(the Gauss transform of Hermite polynomials), or Eq.~(17) on page 290 of
Vol.~2 of Ref.~\cite{ErdInt}.

\end{document}